\renewcommand\footnotetextcopyrightpermission[1]{} 
\def\BibTeX{{\rm B\kern-.05em{\sc i\kern-.025em b}\kern-.08emT\kern-.1667em\lower.7ex\hbox{E}\kern-.125emX}}
\newcommand{\hide}[1]{}
\newcommand{\xhdr}[1]{\vspace{0.5mm}\noindent{\textbf{#1.}}\hspace{0.5mm}}
\newcommand{\nbrweightsone}[1]{D(#1)}
\newcommand{\given}{\;\vert\;}
\newcommand{\prob}[1]{\textnormal{Pr}(#1)}
\newcommand{\inlinediff}[2]{\textnormal{d}#1 / \textnormal{d}#2}
\begin{document}

%
\title{Retrieving Top Weighted Triangles in Graphs}

\author{Raunak Kumar}
\authornote{Both authors contributed equally to this research.}
\affiliation{%
  \institution{Cornell University}
  \city{Ithaca}
  \state{NY}
}
\email{raunak@cs.cornell.edu}

\author{Paul Liu}
\authornotemark[1]
\affiliation{%
  \institution{Stanford University}
  \city{Stanford}
  \state{California}
}
\email{paul.liu@stanford.edu}

\author{Moses Charikar}
\affiliation{%
  \institution{Stanford University}
  \city{Stanford}
  \state{California}
}
\email{moses@cs.stanford.edu}

\author{Austin R. Benson}
\affiliation{%
  \institution{Cornell University}
  \city{Ithaca}
  \state{NY}
}
\email{arb@cs.cornell.edu}

%


\begin{abstract}
  Pattern counting in graphs is a fundamental primitive for many network
  analysis tasks, and a number of methods have been developed for scaling
  subgraph counting to large graphs. Many real-world networks carry a natural
  notion of strength of connection between nodes, which are often modeled by a
  weighted graph, but existing scalable graph algorithms for pattern mining are
  designed for unweighted graphs. Here, we develop a suite of deterministic and
  random sampling algorithms that enable the fast discovery of the
  3-cliques (triangles) with the largest weight in a graph, where weight is measured by a
  generalized mean of a triangle's edges. For example, one of our proposed
  algorithms can find the top-1000 weighted triangles of a weighted
  graph with billions of edges in thirty seconds on a commodity server,
  which is orders of magnitude faster than existing ``fast'' enumeration
  schemes. Our methods thus open the door towards scalable pattern mining in
  weighted graphs.
\end{abstract}

\hide{
  In a typical network science dataset, interactions are encoded on a graph in the
  form of \emph{unweighted} edges between nodes. However, in a growing number of
  real-world datasets there is a natural notion of strength or capacity between
  nodes. Incorporating this notion of \emph{weighted} edges into the graph has
  proven effective in gaining additional insight into the structure of these
  networks~\cite{fill,in,here}. Unfortunately, current literature is more
  extensively focused on the unweighted regime, and tools for analyzing
  large-scale weighted graphs are far and few in between.

  Of particular interest to us is algorithms dealing with statistics of weighted
  triangles in large-scale real-world networks. This is partially motivated by the
  applicability of weighted triangles in the link prediction algorithm of Benson
  et al~\cite{benson2016higher}. In this work, we provide both deterministic and
  sampling algorithms that enables fast computation of top weighted cliques and
  triangles of arbitrary weighted graphs. For the problem of finding a large
  number of heavy weighted triangles, we show that our sampling algorithms are
  simple, easily parallelizable, and achieves a diverse selection of top weighted
  triangles. In addition, we extend our sampling algorithm for the problem of
  finding top-weighted cliques. For the problem of finding the top-$k$ weighted
  triangles (where $k$ is on the order of thousands), we show that our
  deterministic algorithm is orders of magnitudes faster than brute force and
  competitive baseline methods.
}

\maketitle

\section{Introduction}\label{sec:intro}

Small subgraph patterns, also called graphlets or network motifs, have proven
fundamental for the understanding of the structure of complex
networks~\cite{Milo-2004-superfamilies,Benson-2016-hoo,milo2002network}.
One of the simplest non-trivial subgraph patterns is the triangle (3-clique),
and the basic problem of triangle counting and enumeration has been studied
extensively from theoretical and practical
perspectives~\cite{Avron-2010-counting,Eden-2017-triangles,Seshadhri-2013-wedge,BFNPSW15,Stefani-2017-triest}.
These developments are often driven by the desire to scale graph counting to
large networks, where performing computations naively is intractable.
The focus on triangles is in part spurred by the widespread use of the pattern in
graph mining applications, including
community detection~\cite{berry2011tolerating,gleich2012vertex,Rohe-2013-blessing},
network comparison~\cite{contractor2006testing,mahadevan2007orbis,prvzulj2007biological},
representation learning~\cite{Henderson-2012-RolX,Rossi-2015-role}, and
generative modeling~\cite{Robins-2007-ERGMs,robles2016sampling}.
In addition, triangle-based network statistics such as the clustering
coefficient are used extensively in the social sciences
\cite{durak2012degree,lawrence2006organizational,burt2007secondhand,welles2010friend}.

Nearly all of the algorithmic literature on scalable counting or enumeration of
triangles focuses on \emph{unweighted} graphs.
However, many real-world network datasets have a natural notion of \emph{weight}
attached to the edges of the graph~\cite{barrat2004architecture}.
For example, edge weights can capture
tie strength in social networks~\cite{wasserman1994social},
traffic flows in transportation networks~\cite{Jia-2019-FlowSSL}, or
co-occurrence counts in projections of bipartite networks~\cite{Xu-2014-projection}.
Such edge weights offer additional insight into the structure of
these networks. Moreover, edge weights can enrich the types of small
subgraph patterns that are used in analysis. For instance, the network
clustering coefficient has been generalized to account for edge
weights~\cite{opsahl2009clustering,onnelaintensity}; in these cases, a triangle
is given a weight derived from the weights of its constituent edges. Roughly
speaking, the weight of a triangle is typically some combination of the
arithmetic mean, geometric mean, minimum and maximum of the edge weights of the
triangle.
All this being said, we still lack the algorithmic tools for fast analysis of
modern large-scale weighted networks, especially in the area of weighted
triangle listing and counting.

In applications of weighted triangles in this big data regime, it can often
suffice to retrieve only the $k$ triangles of largest weight for some suitable
$k$.
For example, in large online social networks, the weight of an edge could
reflect how likely it is for users to communicate with each other, and top
weighted triangles and cliques in this network could be used for group chat
recommendations. In such a scenario, we would typically only be interested in a
small number of triangles whose nodes are very likely to communicate with each
other as opposed to finding \emph{all} triangles in the graph.

Another application for finding top-weighted triangles appears in prediction
tasks involving higher-order network interactions. The goal of the
``higher-order link prediction'' problem is to predict which new groups of nodes
will simultaneously interact (such as which group of authors will co-author a
scientific paper in the future)~\cite{simplicial}. In this setting, existing
algorithms first create a weighted graph where an edge weight is the number of
prior interactions that involves the two end points and then predict that the
top-weighted triangles in this weighted graph will appear as higher-order
interactions in the future (weight here is measured by a generalized mean of the
triangle's edge weights).
Again, it is not necessary to find all triangles since only the top
predictions will be acted upon.
Existing triangle enumeration algorithms do not scale to massive graphs for these
problems, and we need efficient algorithms for retrieving triangles in large
weighted graphs.

%
In this work, we address the problem of enumerating the top-weighted triangles in a weighted graph.
To be precise, let $G=(V,E,w)$ be a simple, undirected graph with positive edge
weights $w$. We define the weight of a triangle in $G$ be equal to the
generalized $p$-mean; specifically, if a triangle $(i, j, k)$ has edge weights
$w_{ij}$, $w_{jk}$, and $w_{ik}$, then the triangle weight is
\begin{align}\label{eq:genmean}
m_p(i, j, k) := \left[\frac{1}{3}(w_{ij}^{p}+w_{jk}^{p}+w_{ik}^{p})\right]^{1/p}.
\end{align}
Given $G$ and an integer parameter $k$, we develop algorithms to extract the
\emph{top-$k$ triangles} in $G$. We use top-$k$ to refer to the triangles having
the $k$ largest weights, or in other words, the $k$-heaviest triangles. Note
that some special cases of the $p$-mean include arithmetic mean ($p=1$),
geometric mean ($p=0$), harmonic mean ($p=-1$), minimum ($p=-\infty)$ and
maximum ($p=\infty$). This family of means is more general and includes those
previously examined by Opsahl and Panzarasa \cite{opsahl2009clustering} and
Benson et al. \cite{simplicial}.
%

At a high level, we develop two families of algorithms for extracting
top-weighted triangles. The first family of algorithms
(Section~\ref{sec:heavylight}) is deterministic and optimized for extracting
top-$k$ weighted triangles for small $k$ (typically up to a few tens of
thousands). These algorithms take advantage of the inherent heavy-tailed edge
weight distribution common in real-world networks.
In the most general case, we show that under a modified
configuration model, these algorithms are even ``distribution-oblivious,'' in the
sense that they can automatically compute optimal hyper-parameters to the
algorithm for a wide range of input graph distributions. Additionally, the algorithmic analysis is done in a continuous sense (rather than discrete), which may be of independent interest.
The second family of algorithms (Section~\ref{sec:sampling}) is randomized and
aims to extract a large number of heavy triangles (not necessarily the
top-$k$). We show that this family of sampling algorithms is closely connected
to the prior sampling algorithms for \emph{counting} triangles on
\emph{unweighted} graphs~\cite{wedge}. Furthermore, we show that these sampling
algorithms are easily parallelizable.

We find that a carefully tuned parallel implementation of our deterministic
algorithm performs well across a broad range of large weighted graphs, even
outperforming the fast random sampling algorithms that are not guaranteed to
enumerate all of the top-weighted triangles. A parallel implementation of our
algorithm running on a commodity server with 64 cores can find the top 1000
weighted triangles in under 10 seconds on several graphs with hundreds of
millions of weighted edges and in 30 seconds on a graph with nearly two
billion weighted edges. We compare this with the off-the-shelf alternative
approach, which would be an intelligent triangle enumeration algorithm that
maintains a heap of the top-weighted triangles. Our proposed algorithms are
orders of magnitude faster than this standard approach.

\section{Additional related work}\label{sec:relatedwork}

Due its to the wide applicability, there is a plethora of work on
\emph{unweighted} triangle-related algorithms. In the context of enumeration algorithms, 
recent attention has focused on enumeration in the distributed and parallel
setting~\cite{CC11,SV11,AKM13,RH13}. These algorithms typically employ an
optimized brute force method on each machine~\cite{L08,BNPW10}, with the main
algorithmic difficulty in deciding how to partition the data amongst the
machines. Although each machine employs a brute force algorithm, early research
shows that these algorithms run in time almost linear in the number of edges so
long as the \emph{degeneracy} of the graph is small~\cite{CN85}, which has led
to efficient enumeration algorithms~\cite{BFNPSW15,SV11}. For comparison with our
methods, we modify such a fast enumeration method (specifically NodeIterator++~\cite{SV11}) to retain the top-$k$ weighted triangles. Although enumeration algorithms are agnostic to edge weights, we note that the sheer number of triangles in massive graphs renders such an approach prohibitively expensive.

When \emph{enumeration} becomes intractable, triangle-related algorithms focus
instead on merely triangle \emph{counts} or graph statistics such as clustering
coefficients. Again, these statistics are in the unweighted regime, as only the 
number of triangles are considered. There is a progression of sampling methods 
depending on what kind of structures one is sampling from the graph. At a basic level, edge-based
sampling methods sample an edge and counts all incident triangles on that edge.
So-called wedge-based methods sample length-2 paths~\cite{Seshadhri-2013-wedge},
and this concept has been generalized for counting 4-cliques~\cite{path}.
Finally, tiered-sampling combines sampling of arbitrary subgraphs to count the
occurrence of larger graphs (with a focus on 4-cliques and
5-cliques)~\cite{STU17}.

Beyond enumeration and sampling, there are numerous other methods for
triangle-based algorithms, such as graph sparsification
methods~\cite{tsourakakis2009doulion,PT12,ELT16}, spectral and matrix based
methods~\cite{T08,AYZ97}, and a multitude of methods for computing clustering
and closure coefficients~\cite{RH14,wedge,SW05-CC,YBL19}. For a deeper
background on triangle counting, we refer the reader to the overview of Hasan
and Dave~\cite{hasan18review}.

All of the above methods are for triangles. These ideas have been extended in
several ways. There are sampling methods for estimating counts of more general
motifs~\cite{Ahmed-2014-graphsample,JS17,Bressan-2017-graphlets} and motifs with
temporal structure~\cite{samplingtemporal} as well as parallel clique
enumeration methods~\cite{DBS18}. Still, these methods do not work for weighted
graphs, where subgraph patterns appear in generalizations of the clustering
coefficient~\cite{Saramaki-2005-weighted} as well as in graph
decompositions~\cite{Soufiani-2012-graphlet}.

\section{Deterministic algorithms}\label{sec:heavylight}

We begin by developing two types of deterministic algorithms for finding the
top-$k$ weighted triangles in a graph. As mentioned above,
the triangle weight is given by the generalized $p$-mean of its edge weights
as defined in \cref{eq:genmean}. A
simple and robust baseline algorithm is to employ a fast triangle enumeration algorithm for unweighted graphs, compute the weight on each triangle, and pick
out the top-$k$ weighted triangles according to the weighting function (or, to
save memory, we can maintain a heap of the top-weighted triangles). In our tests, we use an optimized sequential version of NodeIterator++~\cite{BFNPSW15, SV11, CN85}, which is the basis for many parallel enumeration algorithms.
We call this a ``brute force'' approach. We note that faster, parallel versions of the brute force approach exist. However, as the results in Section~\ref{sec:experiments} show, even a brute force enumerator with perfect parallelism would require over 2000 cores (or machines) to beat our sequential deterministic algorithm in certain cases.

The brute force approach is agnostic to the distribution of edge weights---it
is the same regardless of the weights. Intuitively however, we expect that triangles of large weight are 
formed by edges of large weight. We exploit this intuition below to develop
faster algorithms.
At a high level, our main deterministic algorithm will try to dynamically
partition the edges into ``heavy'' and ``light'' classes based on edge
weight. Following this partition, we find triangles adjacent to the heavy edges
until the top-$k$ heaviest are identified.

\begin{algorithm}[t]
  \DontPrintSemicolon  
  \SetKwInOut{Input}{Input}
  \SetKwInOut{Output}{Output}
  \KwIn{Weighted graph $G = (V,E, w)$, scaling $p$, number of triangles $k$}
  $H \gets \{e \in E : w_e > \tau \}$ \;
  $T \gets$ all triangles formed by edges in $H$ \;
  \KwRet{$k$ triangles in $T$ with largest $p$-mean weight.}    
  \caption{Static heavy-light algorithm for finding top-weighted triangles.}
  \label{alg:staticheavylight}
\end{algorithm}

\xhdr{A simple heavy-light algorithm}
As a precursor to our dynamic algorithm, consider a simple threshold based
algorithm as follows: (1) given a threshold $\tau$, partition the edges into a
``heavy'' set $H = \{e\,\mid w_e > \tau\}$ and a ``light'' set $L = E\setminus
H$.  For a large threshold $\tau$, we expect most of the edges to be in the
``light" class. Thus, the subgraph $G[H]$ induced by $H$ is small and we may run
any enumeration algorithm on it to get a collection of heavy triangles
(\cref{alg:staticheavylight}). This is not by itself guaranteed to find the heaviest triangles---edges in $H$ might only appear in triangles with edges in $L$. However, we find that in practice the heaviest triangles always have all of its edges in $H$. We note that with no additional \emph{asymptotic cost}, we can use existing triangle
enumeration algorithms to check for triangles with only one or two heavy edges
(and the rest light). Unfortunately, the constant factor slow downs substantially increase the running time on real-world graphs. 

In practice, we find that this simple algorithm vastly outperforms brute
force, and can always find the top-weight triangle given a proper threshold; 
thus, this method will serve as a robust baseline. 
Nonetheless, there are a couple of issues with this
static heavy-light algorithm. The first is that since the algorithm relies on a
single threshold $\tau$ to partition the edges into the light and heavy sets,
many more triangles may be enumerated than is necessary. The second is that it
is difficult to know what an appropriate value for $\tau$ may be given no prior
knowledge of the input graph. In the next section, we develop a dynamic variant
of Algorithm~\ref{alg:staticheavylight} to deal with these issues.

\subsection{Dynamic heavy-light algorithm}\label{subsec:heavylightalg}
\begin{algorithm}[t]
  \DontPrintSemicolon    
  \SetKwInOut{Input}{Input}
  \SetKwInOut{Output}{Output}
  \KwIn{Weighted graph $G = (V, E, w)$, scaling $p$, number of triangles $k$,
    parameter $\alpha_p$.}
  Sort $E$ in decreasing order of weight \;
  Set threshold $\tau = \infty$, triangle set $T = \emptyset$ \;
  Set partitions $S = H = \emptyset$, $L = E$ \;
  Set edge pointers $h = l = -1$ \;
  \tcp{We take the convention that $e_{-1} = \infty$.}
  \While{there are $< k$ triangles above weight $\tau$ in $T$}{\label{algline:dynamicwhileloop}
    \uIf{$w_{l+1} > w_{h+1}^{\alpha_p}$}{\label{algline:dynamiclabelltoh}
      Move $e_{l+1}$ from $L$ to $H$. \;
      $Y = $ triangles formed by $e_{l+1}$ and 2 edges from $S \cup H$ \;
      $Z = $ triangles formed by $e_{l+1}$, 1 edge from $L$ and 1 edge from $S \cup H$ \;
      $T = T \cup (Y \cup Z)$ \;
      $l = l + 1$ \;
    } \Else{\label{algline:dynamiclabelhtos}
      Move $e_{h+1}$ from $H$ to $S$. \;
      $Y = $ triangles formed by $e_h$ and 2 edges from $L$ \;
      $T = T \cup Y$ \;
      $h = h + 1$ \;
    }
    Update threshold $\tau = w_h^p+2w_l^p$ \;\label{algline:dynamicthresholdset}
  }
  \KwRet{$k$ triangles in $T$ with largest $p$-mean weight.}  
  \caption{Dynamic heavy-light algorithm for finding the top-$k$ weighted triangles.}
  \label{alg:dynamicheavylight}
\end{algorithm}

We now develop a dynamic algorithm that uses the concepts of
\cref{alg:staticheavylight} but is significantly more efficient. 

Suppose we have preprocessed the edges $E = \{e_0, e_2, ..., e_{m-1}\}$ so that they
are sorted by \emph{decreasing} weight ($w_i \geq w_{i+1}$ where $w_i$ denotes the weight of the $i$-th edge). Our dynamic
heavy-light algorithm maintains a dynamic partition of the edges into
\emph{three} sets based on weight:
\begin{itemize}[topsep=0pt]
\item $S = \{e_0, \ldots, e_{h}\}$ are the ``super-heavy'' edges
  of the $h$ largest weights;
\item $H = \{e_{h+1}, \ldots, e_{l}\}$ are the ``heavy'' edges consisting of the next $l-h$ edges of largest weights; and
\item $L = \{e_{l+1}, \ldots, e_{m-1}\}$ are the remaining ``light'' edges
  that are neither ``heavy'' nor ``super-heavy.''
\end{itemize}
As the algorithm evolves, we adjust the sets $S$, $H$, and $L$ by changing the values of $h$ and $l$. Any triangle can
be broken down into a combination of ``super-heavy,'' ``heavy,'' and ``light''
edges. As a first order approximation, we intuitively expect the heaviest class
of triangles to have three super-heavy edges, the next heaviest to have two
super-heavy edges and one heavy edge, and so on down to the case of three light edges. Furthermore, by considering the
edges in a specific order, we can also obtain useful bounds on the weight of the
heaviest triangles we have not yet enumerated. Suppose we have
enumerated all triangles containing three super-heavy edges. Then the heaviest
triangle not yet enumerated must have at least 1 edge from either $H$ or $L$. This upper
bounds the $p$-th power of the weight of that triangle to be $\frac{1}{3}\left(2w_0^p + w_h^p\right)$. Our method will
try to enumerate triangles so that this bound decreases as quickly as possible.

\Cref{alg:dynamicheavylight} outlines our procedure.
Each step of the algorithm consists of two steps:
(i) update the partition by moving an edge to a heavier class; and
(ii) enumerate triangles whose edges come from certain classes.
At the end of each step, we maintain the invariant that we have enumerated all
triangles with at least one super-heavy edge or at least two heavy-edges. This
invariant is what allows us to obtain a bound $\tau$ on the heaviest triangle we have
not yet enumerated. The constant $\alpha_p$ of 
\cref{alg:dynamicheavylight},
determines how edges get promoted to heavier classes; this parameter is used in
our analysis to optimize the expected decrease in the threshold $\tau$. We will specify this constant later in our analysis. When
the algorithm begins, the partitions are initialized with $S=H=\emptyset$,
$L=\{e_0,\ldots,e_{m-1}\}$, and $\tau = \infty$.  The algorithm runs until it
enumerates $k$ triangles above a dynamically decreasing threshold $\tau = w_h^p
+ 2w_l^p$ (\cref{algline:dynamicthresholdset} of
\cref{alg:dynamicheavylight}).

Let $\tau^\ast$ be the weight of the $k$-th heaviest triangle. As soon as $\tau
\leq \tau^\ast$, we will have enumerated all of the top-$k$ triangles. This
algorithm solves both issues of our static heavy-light algorithm. If the
threshold $\tau$ hits $\tau^\ast$ exactly, we only enumerate around $k$
triangles. As $\tau$ is computed on the fly, there is no need
to choose the threshold at which we partition the edges. 
As a further benefit, we give a parameter-free version of our algorithm where $\alpha_p$ is \emph{implicitly} computed. The algorithmic analysis is done in a continuous sense (rather than discrete), which may be of independent interest.

\subsection{Algorithm correctness}
\label{sec:correctness}
We first bound the heaviest possible triangle not yet enumerated.
Observe that when an edge moves from $L$ to $H$ (\cref{algline:dynamiclabelltoh}) all triangles
including that edge and at least one edge from $S \cup H$ are enumerated; when an
edge moves from $H$ to $S$ (\cref{algline:dynamiclabelhtos}) all triangles including
that edge are enumerated.
Thus, the if-else clause maintains the invariant that all triangles with at least one edge
from $S$ or at least two edges from $H$ are enumerated.
Now consider any triangle $\Delta$ with weight at least $w_h^p + 2w_l^p$. By case analysis, there must either exist two edges with weight at least $w_l$, or one edge with weight at least $w_h$. This means that either two edges are from $H$, or one edge is from $S$. In either case, our invariant ensures that $\Delta$ must have been enumerated.
In fact, similar reasoning shows a tight threshold is $w_{h+1} + w_{l+1} + w_{l+2}$, as $e_{h+1}, e_{l+1}, e_{l+2}$ is potentially an unenumerated triangle. However, this sum is at most $w_h + 2w_l$ due to the monotonicity of the edge weights. 

Since $\tau$ is monotonically decreasing, this implies that at
the end of each iteration of the while loop in \cref{alg:dynamicheavylight}, $T$
contains the top-$\lvert T \rvert$ weighted triangles in the graph (there may be triangles not enumerated with \emph{equal} weight to one of the
triangles in $T$). Therefore, \cref{alg:dynamicheavylight} correctly returns the
top-$k$ triangles, provided the graph has at least $k$ triangles.
  

\subsection{Algorithm analysis}
\label{sec:analysis}
Although \Cref{alg:dynamicheavylight} is a discrete algorithm we present a simple analysis using continuous differentials.  Let $w_h(t)$ and
$w_l(t)$ denote the weight of edges $e_h$ and $e_l$ at time $t$
respectively (one can think of $t$ as a continuous counter for the while loop iterations).  At time $t$, the threshold is $\tau(t) = w_h(t)^p +
2w_l(t)^p$.  As a proxy to optimizing the triangles enumeration rate, we would like to maximize the rate at which the threshold decreases instead.  To do this, we maximize the derivative $\inlinediff{\tau}{t}$ by adjusting $w_h(t)$ and $w_l(t)$ based on the input parameter $\alpha_p$.  

Let us consider what $\inlinediff{w_h}{t}$ and $\inlinediff{w_l}{t}$ represent. The derivatives approximate the maximum change we can make in $w_h$ or $w_l$ ``per unit of computation time". In each iteration of the while loop, we can choose to spend time decreasing $w_h$ or $w_l$. Thus, a rough approximation to the derivatives is the ratio of the change in weight (by incrementing either the $h$ or $l$ pointer) to the computational cost of changing the corresponding pointer. 

Let $w_{<h} := \max \{w : w < w_h\}$ and let $CDF(w)$ and $PDF(w)$ be the cumulative density function and probability density function of the edge weight distribution respectively. If we move the edge pointer $e_h$, then the average change in $w_h$ is the ratio of $(w_h - w_{<h})$ to the number of edges that have
weight $w_h$.  The number of edges that have weight $w_h$ is proportional to $CDF(w_h) - CDF(w_{<h})$, so the average change in $w_h$ is approximately 
\begin{equation}
    \label{eq:avgchange}
    \frac{w_h - w_{<h}}{CDF(w_h) - CDF(w_{<h})} \approx \frac{1}{PDF(w_h)}.
\end{equation} 
Similarly, the change in $w_l$ is equal to $1/\text{PDF}(w_l)$.

\xhdr{Analysis for power-law distributed weights}
At this point, we are free to continue our analysis with any model for for the
distribution of weights on $G$. One important case which is analyzable is a
power law distribution on the edges, and we find that this type of distribution
is a reasonable model for several of our datasets (see
\cref{fig:edge_weights_plot} for examples). Thus in this section, we carry out
the analysis assuming that the edge weights follow a power law distribution with
parameter $\beta$.

\begin{figure}[tb]
  \centering
  \includegraphics[width=0.495\columnwidth]{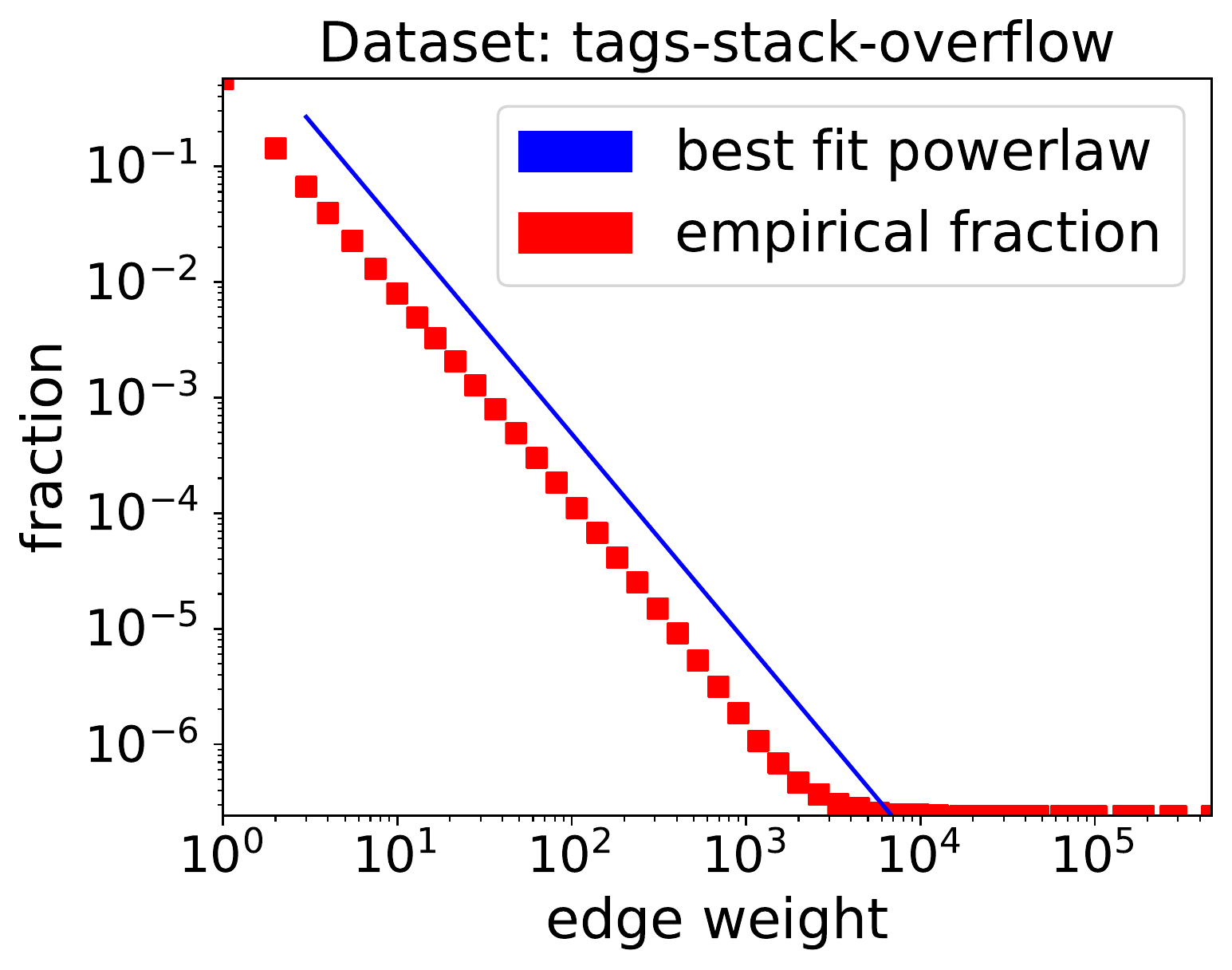}
  \includegraphics[width=0.495\columnwidth]{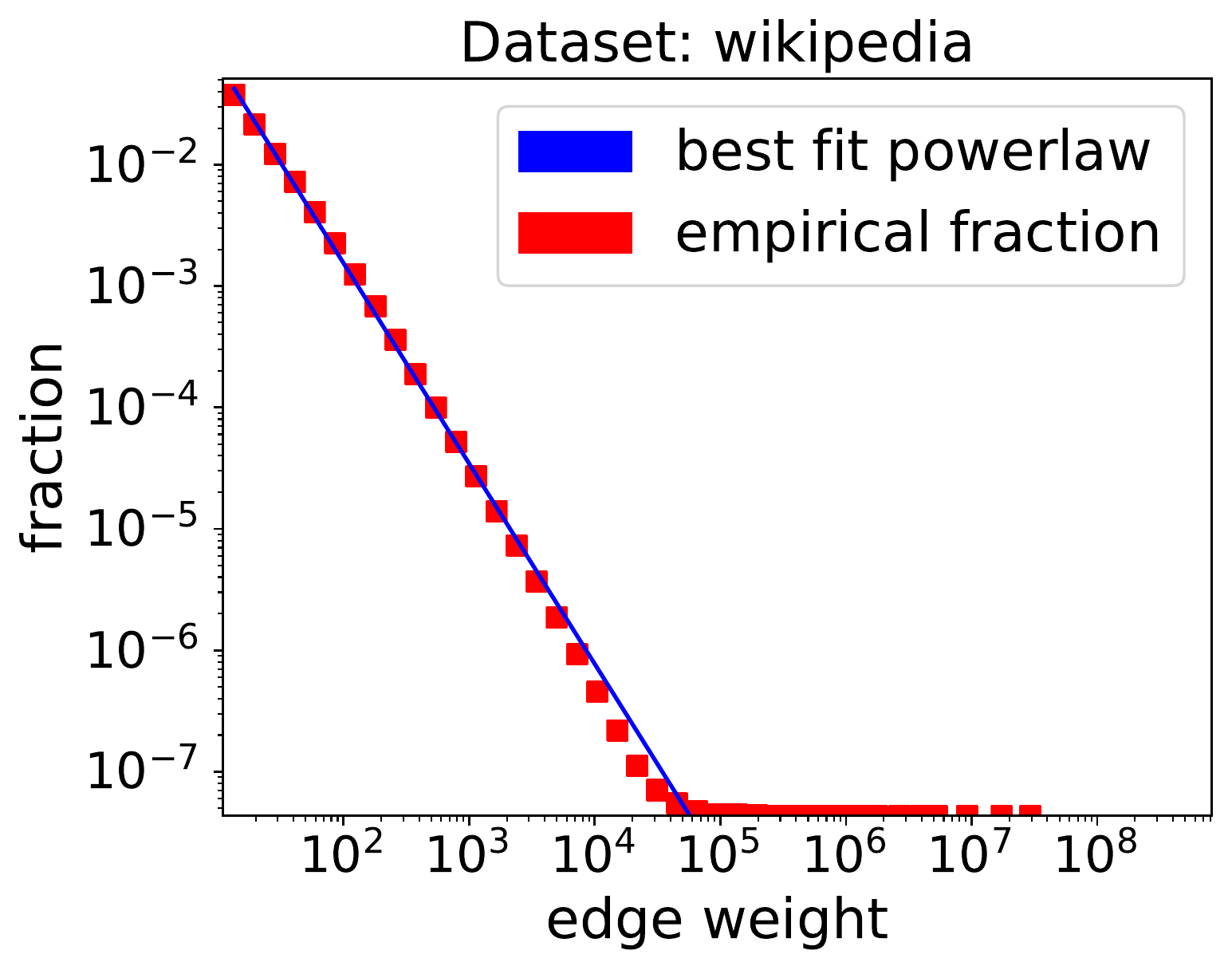}
  \caption{Edge weight distribution in two datasets (see
    \cref{sec:data} for a description of the datasets). These plots suggest that
    a power law distribution on edge weights is a reasonable assumption. With
    this, we have a simple condition (\cref{algline:dynamiclabelltoh}) to choose
    which pointer to move in the dynamic heavy-light algorithm
    (\cref{alg:dynamicheavylight}).}
\label{fig:edge_weights_plot}
\end{figure}

Formally, let $X$ be a random variable. We say $X$ follows a power law
distribution with parameter $\beta>1$ and some constant $a>0$ if $P(X \geq x) \sim a x^{1-\beta}$ for large $x$. Thus, the probability that a random edge
weight is greater than or equal to $w$ is $O(w^{1-\beta})$ and this implies that
the probability that a random edge weight is equal to $w$ is $O(w^{-\beta})$ for
large $w$.  Using this assumption we can write the change in $w_{h}$ and
$w_{l}$ as $O(w_{h}^\beta)$ and $O(w_{l}^\beta)$ respectively.

Now we analyze the computational cost of changing $e_h$ and $e_l$. To do this we impose a simple configuration model on the way that $G$ is generated~\cite{newman2001random}. We assume that each vertex $v$ draws its degree $d$ from a univariate degree distribution $\mathcal{D}$ with the sum of degrees being even. We assume that the graph is generated from the following random process: (1) each vertex $v$ starts out $d_v$ stubs. While there are stubs available, two random stubs are drawn from the set of all stubs, and the vertices corresponding to those stubs are connected. Furthermore, upon connection a random edge weight drawn from the edge weight distribution is assigned to the edge. At the end, all self-loops in the graph are discarded. Note that these assumptions are quite strong, however we find that even this simple analysis yields good estimates for optimal values of $\alpha_p$ in practice (see \cref{sec:experiments}). 
Continuing with the analysis, we now analyze the expected cost in incrementing either the $h$ or $l$ pointer. Let $G_H := G[S \cup H]$ and $G_L := G[L]$, and let $\bar{d}_U$ denote the average degree in a graph $U$. With appropriate data structures for checking the existence of edges in $G_H$, the cost of incrementing $h$ is bounded by the degree sum of the endpoints of $e_h$ in $G_L$, which is on average $O\left(\bar{d}_{G_L}\right)$. Assuming that $G_L$ has approximately as many edges as $G$ (valid in the case of small $k$), $\bar{d}_{G_L} \approx \bar{d}_{G}$. Thus the computational cost of moving $e_h$ is approximately $O\left(\bar{d}_{G}\right)$.

Similarly, the computational cost of moving $e_l$ is bounded by the degree sum of the endpoints of $e_l$ in $G_H$, which is on average $O\left(\bar{d}_{G_H}\right)$. Since the number of edges in $G_H$ is exactly $|S\cup H|$, the assumptions on the weight distribution gives us that $\bar{d}_{G_H} = O\left( CDF(w_l) \bar{d}_G \right)$. Thus the computational cost of moving $e_l$ is approximately $O\left(w_{l}^{1-\beta} \bar{d}_{G}\right)$.

Combining this with \cref{eq:avgchange}, we obtain the following expressions for the derivatives
\begin{equation}\label{eq:derivative}
    \inlinediff{w_{h}}{t} = O\left(w_{h}^\beta / \bar{d}_{\text{G}}\right), \quad
    \inlinediff{w_{l}}{t} = O\left(w_{l}^{2\beta-1} / \bar{d}_G\right).
\end{equation}

\hide{ In each iteration of the algorithm we can only change one of $e_h$ or
  $e_l$, so which one should we choose if our goal is to maximize the rate at
  which the threshold decreases?  }

Note that since $w_l$ and $w_h$ are decreasing, both derivatives are
monotonically decreasing as the algorithm progresses. This monotonicity property
means that greedily choosing the pointer to increment is optimal. The threshold
decrease rate is
$\inlinediff{\tau}{t} = p w_h^{p-1} \inlinediff{w_h}{t} + 2p w_l^{p-1} \inlinediff{w_l}{t}$.
Since at each iteration of the algorithm we can only choose to change one of
$e_h$ or $e_l$, we should greedily change the pointer that gives the most ``bang
per buck'', i.e., choose $e_h$ and $e_l$ such that
\begin{equation}
    \label{eq:ptr_values}
    w_h^{p-1} \inlinediff{w_h}{t} = 2w_l^{p-1} \inlinediff{w_l}{t}
    \implies w_{h} \sim O\left( w_{l}^{2-\frac{p}{p-1+\beta}} \right).
\end{equation}
In other words, we should maintain the edge pointers $e_h$ and $e_l$ such that
the weights are separated geometrically by $\alpha_p = 2-\frac{p}{p-1+\beta}$.
  
\xhdr{Distribution oblivious dynamic heavy-light algorithm}
The analysis in the previous section yields a fast algorithm given a known prior
on the power law parameter of the weight distribution. In many applications,
this can be easily and robustly estimated. In this section we present a method
for which the parameter $\alpha_p$ can be implicitly estimated on the fly,
albeit with less robustness in practice than simply setting $\alpha_p$.

Although we assumed a power-law distribution on the edge weights, our analysis
is actually much more general than that. Note that as long as the derivatives
$\inlinediff{w_{h}}{t}$ and $\inlinediff{w_{l}}{t}$ are monotonic, our greedy
method of incrementing the pointers will be optimal. For the derivatives to be
monotonically decreasing, the only requirement is that the $PDF$ of the weight
distribution is monotonically increasing as the weight decreases. This includes
a wide family of distributions such as power laws and uniform distributions.

Furthermore, the analysis we used to derive $\alpha_p$ can also be used to
compute $\alpha_p$ implicitly. By maintaining an estimate of the derivatives
$\inlinediff{w_{e_h}}{t}$ and $\inlinediff{w_{e_l}}{t}$ as the algorithm runs,
we can compute all the derivatives used in the analysis on the fly and greedily
change the pointer with higher value of $w^{p-1} \inlinediff{w}{t}$
(Equation~\ref{eq:ptr_values}).

Following the analysis in the previous section, the change in weight for $e_h$
is estimated by the ratio of $w_{h} - w_{<h}$ and the number of edges that have
weight $w_{h}$, and similarly for $e_l$. The computational cost of changing $e_l$
can be estimated by the sum of the degrees of the endpoints of $e_l$ in $G_H$,
and similarly for $e_h$ with $G_L$. Consequently, we can obtain a
``distribution-oblivious'' algorithm that works on a family of monotone
distributions.

In our experiments we find that this automatic way of implicitly computing
$\alpha_p$ is quite successful, although in practice noise in the derivative
estimates may cause this algorithm to be slower than using a set value of
$\alpha_p$.

\section{Random sampling algorithms}\label{sec:sampling}

In this section, we develop random sampling algorithms designed to sample a large collection of triangles with large weight. More formally, given a generalized $p$-mean as a
weight function, these algorithms all sample triangles exactly proportional to
their weight. The main difference between the algorithms is how efficiently they can
generate samples.

We specifically generalize three types of sampling schemes that have been used to
estimate triangle counts in unweighted graphs. The first scheme is based on
\emph{edge sampling}~\cite{kolountzakis2012efficient,tsourakakis2011triangle,pagh2012colorful},
where we first sample one edge and then enumerate triangles adjacent to the sampled edge.  The second
method uses ideas from \emph{wedge sampling}~\cite{wedge}, where we sample two adjacent
edges and check whether
these two edges induce a triangle. The final approach generalizes the idea of path sampling~\cite{path}; we sample a three-edge path and check if it induces a triangle.
Although these approaches were are all
designed for triangle counting in unweighted graphs, they generalize quite
seamlessly to a simple scheme for sampling highly weighted triangles. The main
benefits of these algorithms are that they are simple to implement and also easy
parallelize, since samples can be trivially generated in parallel.

Throughout this section, we assume that our weighting function for a triangle is
a generalized $p$-mean as given in \cref{eq:genmean}. Since the weighted
ordering of triangles is independent of the scaling by $1/3$ and the exponent
$1/p$, we can instead consider the more simple weighting function
\begin{align}\label{eq:pmean_unnorm}
w_p(a, b, c) = w_{ab}^p + w_{bc}^p + w_{ac}^p.
\end{align}
For a given vertex $a \in V$, we will use $N(a)$
to denote the set of neighbors of $a$:
$N(a) = \{b \in V \given (a, b) \in E\}$,
and $d_a$ to be the (unweighted) degree of node $a$ (i.e., $d_a = \lvert N(a) \rvert$).

\subsection{Weighted edge sampling}\label{subsec:samplingalg}
\begin{algorithm}[tb]
  \DontPrintSemicolon
  \SetKwInOut{Input}{Input}
  \SetKwInOut{Output}{Output}
  \KwIn{Weighted graph $G = (V,E,w)$, scaling $p$, number of iterations $t$,
  number of triangles $k$}
  Initialize triangle set $T \gets \emptyset$ \;    
  \For{iteration $1, \ldots, t$}{
    Sample edge $(a, b) \propto w_{ab}^p$\;
    \For{each neighbor $c \in N(a) \cap N(b)$}{
      $T \gets T \cup \{(a,b,c)\}$\;
    }
  }
  \KwRet{$k$ triangles in $T$ with largest $p$-mean weight.}
  \caption{Weighted edge random sampling algorithm}
  \label{alg:edge}
\end{algorithm}

We first discuss the edge sampling algorithm (\cref{alg:edge}). The algorithm is
based on a simple two-step procedure. First, we sample a single edge according to
the following distribution
\[
\textstyle \prob{\text{sample edge $(a, b)$}} := w_{ab}^p / Z,\; Z = \sum_{(u, v) \in E} w_{uv}^p.
\]
Second, after we sample an edge $(a, b)$, we enumerate all triangles $(a, b, c)$
incident to $(a, b)$. These two steps are repeated several times.

The above procedure has a few issues. First, it takes $O(d_a + d_b)$ time to
find triangles adjacent to an edge $(a, b)$, which can be expensive in graphs
where high-degree nodes are connected; we get around this in our next sampling
scheme.  Second, there is no guarantee that the above procedure will generate at least $k$ unique triangles, and moreover, even if the algorithm samples a
sufficient number of triangles, it is not necessarily the case that these are
the \emph{top-weighted} triangles. The latter issue is an inherent limitation of
random sampling schemes in general. All that being said, the procedure has the
nice property of being biased in terms of sampling triangles with high weight,
formalized as follows.
\begin{proposition}\label{prop:edge}
  The probability that a triangle $(a, b, c)$ is enumerated in a given
  iteration of \cref{alg:edge} is $w_p(a, b, c) / Z$, where $Z = \sum_{e \in E} w_e^p$.
\end{proposition}
\begin{proof}
   The probability that any edge $(u,v)$ is sampled initially is $w_{u,v}^p / Z$.
   Triangle $(a,b,c)$ is enumerated if any one of if edge $(a, b)$, $(b, c)$, or $(a, c)$
   is this sampled edge.
\end{proof}

While the ES method is simple to describe, making the algorithm fast in practice
requires some careful implementations. First, a natural way of sampling an edge
is to simply pick one at random with probability proportional to its weight, but
this is slow because there are a large number of edges. However, there is
typically a much smaller number of \emph{unique} edge weights. Thus, we first
first sample an edge weight and then sample an edge with this weight.
Pre-processing calculation of sampling probabilities for this approach involves
iterating over the edges and computing two quantities---a cumulative edge weight
(in order to sample an edge weight) and a map of edge weight to edges (in order
to sample an edge given an edge weight). This pre-processing step can take much
longer than the sampling loop if implemented naively. However, in a sorted list
of edges, all edges that share the same edge weight lie in a contiguous chunk
and this significantly speeds up the process of computing the above
quantities.


\subsection{Weighted wedge sampling}
\begin{algorithm}[tb]
  \DontPrintSemicolon
  \SetKwInOut{Input}{Input}
  \SetKwInOut{Output}{Output}
  \KwIn{Weighted graph $G = (V,E,w)$, scaling $p$, number of iterations $t$,
    number of triangles $k$}
  Initialize triangle set $T \gets \emptyset$ \;  
  \For{iteration $1, \ldots, t$}{
    Sample node $a$ with probability as in \cref{eq:wedgeone} \;
    Sample $b \in N(a)$ with probability as in \cref{eq:wedgetwo} \;
    Sample $c \in N(a)$ with probability as in \cref{eq:wedgethree} \;
    \uIf{nodes $a,b,c$ form a triangle}{
      $T \gets T \cup \{(a,b,c)\}$\;
    }
  }
  \KwRet{$k$ triangles in $T$ with largest $p$-mean weight.}
  \caption{Weighted wedge random sampling algorithm}
  \label{alg:wedge}
\end{algorithm}

One of the issues with the simple edge sampling scheme described above is that
we have to look over the neighbors of the end points of the sampled edge in
order to find triangles. This can be expensive if the degrees of these nodes are
large. An alternative approach is to sample adjacent edges---also called
\emph{wedges}---with large weight and then check if each wedge induces a
triangle. This scheme is called \emph{wedge sampling} and has been
used as a mechanism for estimating the total number of triangles in an
unweighted graph~\cite{wedge,TurkogluT17}.

The overall sampling scheme is outlined in \cref{alg:wedge}.  Each iteration of
\cref{alg:wedge} proceeds in three steps. First, we sample a single node with a
bias towards nodes that participate in heavily weighted edges. Specifically, let
$\nbrweightsone{a} = \sum_{b \in N(a)} w_{ab}^p$ denote the sum of the edge weights incident to $a$.
We sample a node $a$ according to the following distribution:
\begin{align}
\displaystyle \prob{\text{sample $a$}} = \tilde{W}_{1}(a) / Z_1 := 2 \cdot d_a \cdot \nbrweightsone{a} / Z_1, \label{eq:wedgeone}
\end{align}
where $Z_1$ is a normalizing constant. Next, we sample a neighbor of node $a$,
again with a bias towards nodes that participate in heavily weighted edges. The
specific distribution is
\begin{equation}
\begin{split}
   \prob{\text{sample $b \in N(a) \given a$}} & = \tilde{W}_{2}(b \given a) / Z_2 \\
                                              & := (d_a \cdot w_{ab}^p + \nbrweightsone{a}) / Z_2, \label{eq:wedgetwo}
\end{split}
\end{equation}
where $Z_2$ is a normalizing constant.
We have now produced a single edge and want to produce an adjacent edge.
We do this by sampling another neighbor of $a$, this time with probability
\begin{equation}
\begin{split}
    \prob{\text{sample $c \in N(a) \given a,b$}} & = \tilde{W}_{3}(c \given a,b) / Z_3 \\
                                                 & := (w_{ac}^p + w_{ab}^p) / Z_3, \label{eq:wedgethree}
\end{split}
\end{equation}
where $Z_3$ is again a normalizing constant. If the sampled wedge
$\{(a,b), (a,c)\}$ induces a triangle, then we add it to our collection.

Similar to the unweighted scheme of Seshadhri et al.~\cite{wedge}, we show the following:
\begin{proposition}\label{prop:wedge}
  A given iteration of \Cref{alg:wedge} samples triangle $(a, b, c)$ with
  probability $w_p(a,b,c) / \tilde{Z}$, where $\tilde{Z} = \sum_{v \in V} d_v \cdot D(v)$.
\end{proposition}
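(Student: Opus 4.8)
The plan is to condition on which of the triangle's three vertices plays the role of the sampled center, and to exploit a telescoping cancellation among the three normalizing constants $Z_1, Z_2, Z_3$. Fix a triangle $(a,b,c)$. Since all three of its edges are present, this triangle is produced precisely when the sampled wedge is centered at one of $a$, $b$, or $c$ with the other two triangle vertices as its endpoints, and every such wedge automatically closes. I would therefore write the probability as a sum of three contributions, one per choice of center, and evaluate the center-$a$ contribution in detail; the other two then follow by relabeling.

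First I would compute the normalizers explicitly by summing the unnormalized weights. Summing \cref{eq:wedgeone} over $V$ gives $Z_1 = 2\sum_{v \in V} d_v D(v) = 2\tilde{Z}$. Summing the step-two weight $d_a w_{ab}^p + D(a)$ over $b \in N(a)$ and using $\sum_{b \in N(a)} w_{ab}^p = D(a)$ gives $Z_2 = 2 d_a D(a)$. The key identity is obtained at the third step: summing $w_{ac}^p + w_{ab}^p$ over $c \in N(a)$ yields $Z_3 = D(a) + d_a w_{ab}^p$, which is exactly the unnormalized step-two weight $\tilde{W}_2(b \mid a)$. Thus the denominator used at step three coincides with the numerator used at step two.

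Next I would multiply the three conditional factors for the center-$a$ wedge that selects $b$ and $c$. Because $Z_3 = \tilde{W}_2(b\mid a)$, the step-two factor cancels against the step-three denominator, and after also cancelling $d_a D(a)$ coming from $\tilde{W}_1(a)/Z_1$ against $Z_2$, the center-$a$ contribution collapses to something proportional to $w_{ab}^p + w_{ac}^p$, the sum of the $p$-th powers of the two triangle edges meeting the center. Relabeling gives contributions proportional to $w_{ab}^p + w_{bc}^p$ for center $b$ and $w_{ac}^p + w_{bc}^p$ for center $c$. Adding the three, each of $w_{ab}^p$, $w_{bc}^p$, $w_{ac}^p$ appears in exactly two terms, and the constants combine so that the three contributions sum to $w_p(a,b,c)/\tilde{Z}$.

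I expect the main obstacle to be bookkeeping rather than any conceptual difficulty: correctly evaluating $Z_1, Z_2, Z_3$ — in particular spotting the telescoping identity $Z_3 = \tilde{W}_2(b\mid a)$ that triggers the cancellation — and then carefully tracking the symmetry so that the leading constant comes out right. Within a fixed center the two endpoints can be drawn in two orders, so one must decide consistently (in line with the ``three wedges per triangle'' convention from the unweighted case) how each wedge is counted, to ensure the final constant matches the stated normalization $\tilde{Z} = \sum_{v \in V} d_v D(v)$ rather than a spurious multiple of it. Once the cancellation is set up, the remaining computation is a routine summation.
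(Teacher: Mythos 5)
Your proposal is correct and follows essentially the same route as the paper's own proof: the same telescoping identities $Z_2 = \tilde{W}_1(a) = 2\,d_a D(a)$ and $Z_3 = \tilde{W}_2(b \mid a)$ collapse the per-center probability to $(w_{ab}^p + w_{ac}^p)/Z_1$, and summing over the three centers with $Z_1 = 2\tilde{Z}$ yields the claim. If anything, your closing remark about the two possible sampling orders of the endpoints within a fixed center is more careful than the paper's proof, which tacitly counts one ordered wedge per center.
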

\begin{proof}
The normalizing constants in \cref{eq:wedgeone,eq:wedgetwo,eq:wedgethree} are
  $Z_1 = 2 \sum_{v \in V} d_v \cdot D(v)$,
  $Z_2 = \tilde{W_{1}}(a)$, and
  $Z_3 = \tilde{W}_2(b \given a)$
  Thus, the probability of sampling wedge $(a,b,c)$ centered on $a$ is equal to
  \[
  \frac{\tilde{W}_1(a)}{Z_1} \cdot \frac{\tilde{W}_2(b \given a)}{Z_2} \cdot \frac{\tilde{W}_3}{Z_3} = \frac{\tilde{W}_3}{Z_1} = \frac{w_{ab}^p + w_{ac}^p}{Z_1}.
  \]
  Thus, the probability of sampling any of the three wedges consisting of nodes $a,b$ and $c$ is equal to
  \[
  (w_{ab}^p + w_{ac}^p) / Z_1 + (w_{ac}^p + w_{bc}^p) / Z_1 + (w_{bc}^p + w_{ab}^p) / Z_1.
  \]
  Plugging in the expression for $Z_1$ shows that this probability is equal to $w_p(a,b,c) / \tilde{Z}$.
  Since triangle $(a,b,c)$ is sampled if any of its three wedges is sampled, this yields the desired result. 
\end{proof}


\subsection{Weighted path sampling}
\begin{algorithm}[t]
 \DontPrintSemicolon
 \SetKwInOut{Input}{Input}
 \SetKwInOut{Output}{Output}
 \KwIn{Weighted graph $G = (V,E,w)$, scaling $p$, number of iterations $t$, number of triangles $k$.}
 Initialize triangle set $T = \emptyset$ \;
 \For{$iteration \, 1, \ldots, t$}{
    Sample edge $e = (a,b)$ with probability as in \cref{eq:pathone} \;
    Sample $c \in N(a) \setminus b$ with probability as in \cref{eq:pathtwo} \;
    Sample $c' \in N(b) \setminus a$ with probability as in \cref{eq:paththree} \;
    \uIf{$c = c'$}{
        $T \gets T\cup\{(a,b,c)\}$ \;
    }
 }
 \KwRet{$k$ triangles in $T$ with largest $p$-mean weight.}
 \caption{Weighted path sampling algorithm}
 \label{alg:path}
\end{algorithm}
We can also design a sampling scheme based on path sampling~\cite{path}.
Previously, edge sampling found triangles after sampling a single edge and
wedge sampling found triangles after sampling two pairs; here,
path sampling starts by sampling three edges that will be biased towards finding
a weighted triangle.
This sampling scheme performs poorly in practice, but we include it here for the sake of theoretical interest.

For path sampling, we first sample an edge $(a, b)$ and then sample two more edges $(c,a)$ and $(b, c')$ connected to $(a,b)$ (producing a length-three path). A triangle is obtained when this length three path closes into a triangle ($c = c'$). 
Let $\tilde{d}_v := d_v - 1$ and $\tilde{D}_u(v) := D(v) - w_{uv}^p$. The sampling probabilities are as follows:
\begin{multline}
   \prob{\text{sample $(a, b)$}} = \tilde{W}_1(a,b) / Z_1 \\
                                  := \left(\tilde{d}_a\tilde{d}_b w_{ab}^p + \tilde{d}_a\tilde{D}_a(b) + \tilde{d}_b \tilde{D}_b(a)\right) / Z_1, \label{eq:pathone}
\end{multline}
\begin{multline}
   \prob{\text{sample }c\in N(a)\setminus{b} \given a,b} = \tilde{W}_2(c\given a,b) / Z_2 \\
                                                          := \left(\tilde{d}_b \cdot (w_{ac}^p + w_{ab}^p) + \tilde{D}_a(b)\right)/Z_2, \label{eq:pathtwo}
\end{multline}
\begin{multline}
   \prob{\text{sample }c'\in N(b)\setminus{a} \given a,b} = \tilde{W}_3(c'\given a,b) / Z_3 \\
                                                           := \left(w_{ac}^p + w_{ab}^p + w_{bc'}^p\right)/Z_3, \label{eq:paththree}
\end{multline}
where $Z_1, Z_2, Z_3$ are normalizing constants.

Under these sampling probabilities, the path sampling method (\cref{alg:path})
samples triangles proportional to their weight, just as in
\cref{alg:edge,alg:wedge}.
\begin{proposition}\label{prop:path}
  A given iteration of \Cref{alg:path} samples triangle $(a, b, c)$ with
  probability $w_p(a,b,c) / \tilde{Z}$, where \\ $\tilde{Z} = \sum_{v \in V} \left(\tilde{d}_a\tilde{d}_b w_{ab}^p + \tilde{d}_a\tilde{D}_a(b) + \tilde{d}_b \tilde{D}_b(a)\right)$.
\end{proposition}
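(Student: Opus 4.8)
The plan is to reuse the telescoping mechanism behind the proof of \Cref{prop:wedge}: I would compute the conditional normalizers $Z_2$ and $Z_3$ in closed form and show that each cancels exactly the numerator produced at the previous sampling step, so that the product of the three sampling probabilities collapses to a single term. Concretely, I would fix a target triangle $(a,b,c)$, condition on the event that the sampled central edge is $(a,b)$ and that the apex vertex $c$ is produced on both sides ($c=c'$), and write the probability of this realization as $\frac{\tilde{W}_1(a,b)}{Z_1}\cdot\frac{\tilde{W}_2(c\mid a,b)}{Z_2}\cdot\frac{\tilde{W}_3(c\mid a,b)}{Z_3}$. Everything then hinges on evaluating $Z_2$ and $Z_3$.

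The substantive computation is the two neighbor-sums. For $Z_2=\sum_{c\in N(a)\setminus b}\tilde{W}_2(c\mid a,b)$ I would use that the term $\tilde{D}_a(b)$ is independent of $c$ and hence contributes $\tilde{d}_a$ copies, together with the identity $\sum_{c\in N(a)\setminus b} w_{ac}^p = D(a)-w_{ab}^p = \tilde{D}_b(a)$; collecting terms gives $Z_2 = \tilde{d}_a\tilde{d}_b w_{ab}^p + \tilde{d}_b\tilde{D}_b(a) + \tilde{d}_a\tilde{D}_a(b) = \tilde{W}_1(a,b)$. An identical manipulation for $Z_3=\sum_{c'\in N(b)\setminus a}\tilde{W}_3(c'\mid a,b)$, now using $\sum_{c'\in N(b)\setminus a} w_{bc'}^p = D(b)-w_{ab}^p = \tilde{D}_a(b)$ and the fact that $w_{ac}^p+w_{ab}^p$ is constant in $c'$, yields $Z_3 = \tilde{d}_b(w_{ac}^p+w_{ab}^p)+\tilde{D}_a(b) = \tilde{W}_2(c\mid a,b)$. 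Substituting these two identities makes the product telescope to $\tilde{W}_3(c\mid a,b)/Z_1$, and at closure $c'=c$ we have $\tilde{W}_3 = w_{ac}^p+w_{ab}^p+w_{bc}^p = w_p(a,b,c)$. Thus a single realization already contributes exactly $w_p(a,b,c)/Z_1$, which immediately establishes the essential claim that the sampling probability is proportional to the triangle weight $w_p(a,b,c)$.

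The last step is to sum over the orderings that close into the given triangle and to reconcile the result with the stated normalizer $\tilde{Z}$, and I expect this to be the main obstacle. Unlike the wedge proof, where each of the three wedges supplies only two of the triangle's three edges, here the telescoped numerator is already the full weight $w_p(a,b,c)$, so the remaining work is purely combinatorial: one must count how many sampled paths close into $\{a,b,c\}$ (i.e.\ which of the three edges serves as the central edge, and in which orientation) and check that this multiplicity is exactly absorbed by the normalizing constant. In doing so one should read the displayed $\tilde{Z}=\sum_{v\in V}(\cdots)$ as a sum over edges $(a,b)\in E$ of $\tilde{W}_1(a,b)$ (the quantity $Z_1$), since the summand depends on an edge rather than a single vertex; the delicate point is tracking the constant factor relating this $Z_1$ to $\tilde{Z}$ so that the per-realization probabilities sum to $w_p(a,b,c)/\tilde{Z}$. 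The two normalizer evaluations above are routine once the neighbor-sum identities are in place.
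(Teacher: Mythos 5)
Your proposal follows essentially the same route as the paper's proof: the two neighbor-sum evaluations $Z_2=\tilde{W}_1(a,b)$ and $Z_3=\tilde{W}_2(c\given a,b)$, and the resulting telescoping of the three-factor product to $\tilde{W}_3(c\given a,b)/Z_1 = w_p(a,b,c)/Z_1$, are exactly the computation in the paper, down to the same identities $\sum_{c\in N(a)\setminus b}w_{ac}^p=\tilde{D}_b(a)$ and $\sum_{c'\in N(b)\setminus a}w_{bc'}^p=\tilde{D}_a(b)$. Where you go beyond the paper is the final accounting step, and there your instincts are sound: the paper's proof stops after the single realization with central edge $(a,b)$ and concludes only that the sampling probability is ``proportional to its weight,'' never summing over which of the triangle's three edges serves as the central edge and never reconciling the constant with the displayed $\tilde{Z}$. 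Carrying out the count you describe, each of the three central-edge choices closes into $(a,b,c)$ with probability $w_p(a,b,c)/Z_1$ (the orientation of the central edge is immaterial, since the telescoped numerator $w_{ac}^p+w_{ab}^p+w_{bc}^p$ is symmetric in $a$ and $b$), so the total per-iteration probability is $3\,w_p(a,b,c)/Z_1$. You are also right that the displayed $\tilde{Z}=\sum_{v\in V}(\cdots)$ must be read as $\sum_{(a,b)\in E}\tilde{W}_1(a,b)=Z_1$; under that reading the proposition as stated is off by a factor of $3$, the analogue of the factor $2$ that \cref{prop:wedge} explicitly absorbs into its $\tilde{Z}$. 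So the ``delicate point'' you flag is a genuine imprecision in the paper's statement and proof rather than a gap in your argument: your derivation establishes everything the paper's does, and is more careful about the multiplicity.
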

\begin{proof}
 Let $Z_1, Z_2$ and $Z_3$ be the normalization constants of $\tilde{W}_1(a,b),
\tilde{W}_2(c\given a,b)$ and $\tilde{W}_3(c'\given a,b)$ (equations
\eqref{eq:pathone}, \eqref{eq:pathtwo} and \eqref{eq:paththree})
respectively. We have
\begin{align*}
    Z_2 &= \sum\limits_{c \in N(a) \setminus b} \tilde{d}_b(w_{ac}^p + w_{ab}^p) + \tilde{D}_a(b) \\
        &= \tilde{d}_b \tilde{D}_b(a) + \tilde{d}_a\tilde{d}_b w_ab + \tilde{d}_a \tilde{D}_a(b) \\
        &= \tilde{W}_1(a,b). \\
    Z_3 &= \sum\limits_{c' \in N(b) \setminus a} w_{ac}^p + w_{ab}^p + w_{bc'}^p \\
        &= \tilde{d}_b \cdot (w_{ac} + w_{ab}) + \tilde{D}_a(b) \\
        &= \tilde{W}_2(c\given a,b).
\end{align*}
The probability of sampling triangle $(a,b,c)$ is equal to
$$
\frac{\tilde{W}_1(a,b)}{Z_1} \cdot \frac{\tilde{W}_2(c\given a,b)}{Z_2} \cdot \frac{\tilde{W}_3(c'\given a,b)}{Z_3} = \frac{\tilde{W}_3(c'\given a,b)}{Z_1} = \frac{w_p(a,b,c)}{Z_1}.
$$
Thus, the probability of sampling triangle $(a,b,c)$ is
proportional to its weight.
\end{proof}

\subsection{Number of samples}

\Cref{prop:edge,prop:wedge,prop:path} says that \cref{alg:edge,alg:wedge,alg:path} tend to sample
edges with large weight.
However, there is no guarantee that the top-weighted triangles are actually
enumerated.
Some standard probabilistic analysis can at least give us some sense on how many
iterations we need.
Specifically, if $q = w_p(a,b,c) / Z$ is the probability of sampling a triangle
$(a,b,c)$, then for any $\delta \in (0,1)$, $s \geq \log(1/\delta) / q$ samples
guarantees that $(a,b,c)$ is enumerated with probability at least $1 - \delta$.
To see this, let $r = 1 - (1-q)^s$ denote the probability that triangle
$(a,b,c)$ is sampled at least once in $s$ samples. Using the fact that $1-x \leq \exp(-x)$,
we get that $r \geq 1 - \exp(-sq) \geq 1 - \delta$.

We can immediately see that the normalizing constants for wedge and path sampling drive
up the number of samples required to get top weighted triangles, as compared to
edge sampling. However, obtaining samples with \cref{alg:edge} can be much more
costly if we have to find common neighbors of nodes with large degree. It is
not immediately clear which algorithm should be better, but our experimental
results in the next section show that edge sampling is superior in practice.

\subsection{Extensions to cliques}\label{subsec:cliquesampling}
All of our sampling algorithms can also be used to sample cliques of arbitrary size $k$. The high level objective of each sampling algorithm is to sample some sort of subgraph with probability proportional to its weight. The three sampling algorithms we present are ways to sample edges, wedges, and length-3 paths respectively. To convert one of our triangle samplers to a clique sampler, we can simply sample a subgraph with one of our three algorithms, and then enumerate all cliques incident on that subgraph. For edge sampling, we can sample an edge and enumerate all $(k-2)$-cliques incident on that edge. For wedge and path sampling, we can sample wedges and paths until we sample a triangle. Once a triangle is sampled, we can then enumerate all $(k-3)$-cliques incident on that triangle.  To test out this concept, we include additional experiments where we extend edge sampling into a clique sampler.

This natural extension to cliques is an advantage over our dynamic heavy-light algorithm
from \cref{sec:heavylight}, which is not easily extended to larger cliques. However, our main
focus is on triangles, for which we provide several numerical experiments in the next
section. We provide some experiments for cliques in \cref{app:cliques}.

\section{Numerical experiments}\label{sec:experiments}

We now report the results of several experiments measuring the performance of
our deterministic and random sampling algorithms compared to some competitive
baselines such as the static heavy-light thresholding method in
\cref{alg:staticheavylight}. Implementations of our algorithms and code to reproduce results
are available.\footnote{\url{https://github.com/raunakkmr/Retrieving-top-weighted-triangles-in-graphs}}

We find that edge sampling works much better than
wedge sampling for our random sampling algorithms but that our deterministic
heavy light algorithm is even faster across a wide range of datasets,
outperforming the baselines by orders of magnitude in terms of running time.

\begin{table}[b]
\caption{Summary statistics of datasets.}
\centering
\scalebox{0.9}{
\begin{tabular}{r c c c c}
\toprule
\multirow{2}{*}{dataset} & \multirow{2}{*}{\# nodes} & \multirow{2}{*}{\# edges} & \multicolumn{2}{c}{edge weight} \\ \cmidrule{4-5}
& & & mean & max \\
\midrule
tags-stack-overflow    & 50K  & 4.2M & 13  & 469 \\
threads-stack-overflow & 2.3M & 21M  & 1.1 & 546 \\
Wikipedia-clickstream  & 4.4M & 23M  & 347 & 817M  \\
Ethereum               & 38M  & 103M & 2.8 & 1.9M \\
AMiner                 & 93M  & 324M & 1.3 & 13K \\
reddit-reply           & 8.4M & 435M & 1.5 & 165K \\
MAG                    & 173M & 545M & 1.7 & 38K \\
Spotify                & 3.6M & 1.9B & 8.6 & 2.8M \\
\bottomrule
\end{tabular}
}
\label{tab:dataset}
\end{table}

\subsection{Data}\label{sec:data}

We used a number of datasets in order to test the performance our algorithms.
The datasets can be found at \url{http://www.cs.cornell.edu/~arb/data/index.html}.
\cref{tab:dataset} lists summary statistics of the datasets and we describe
them briefly below.

\xhdr{tags-stack-overflow~\cite{simplicial}}
On Stack Overflow, users ask, answer, and discuss computer programming
questions, and users annotate questions with 1--5 tags. The nodes in this graph
correspond to tags, are the weight between two nodes is the number of questions
jointly annotated by the two tags.

\xhdr{threads-stack-overflow~\cite{simplicial}}
This graph is constructed from a dataset of user co-participation on Stack
Overflow question threads that last at most 24 hours.  Nodes
correspond to users and the weight of an edge is the number of times that two
users appeared in one of these short question threads.

\xhdr{Wikipedia-clickstream~\cite{Wikipedia-clickstream}}
This graph is derived from Wikipedia clickstream data from request
logs in January, 2017 that capture how users transition between articles (only
transitions appearing at least 11 times were recorded).  The nodes of the graph
correspond to Wikipedia articles and the weight between two nodes is the number
of times users transitioned between the two pages.

\xhdr{Ethereum}
Ethereum is a blockchain-based computing platform for decentralized
applications. Transactions are used to update state in the Ethereum network, and
each transaction has a sender and a receiver address. We create a graph where
the nodes are addresses and the weight between two nodes is the number of
transactions between the two addresses, using all transactions on the platform
up to August 17, 2018, as provided by \texttt{blockchair.com}.

\xhdr{AMiner and MAG~\cite{Tang-2008-AMiner,MAG-data}}
These graphs are constructed by two large bibliographic databases---AMiner and
the Microsoft Academic Graph.  We construct weighted co-authorship graphs, where
nodes represent authors and the weight between two nodes is the number of papers
they have co-authored. Papers with more than 25 authors were omitted
from the graph construction.

\xhdr{reddit-reply~\cite{hessel2016science,samplingtemporal}}
Users on the social media web site \texttt{reddit.com} interact by commenting on
each other's posts. We derive a graph from a collection of user
comments. Nodes are users and the weight of an edge is the number of
interactions between the two users.

\xhdr{Spotify}
As part of a machine learning challenge, the music streaming platform Spotify
released a large number of user ``listening sessions,'' each consisting of a set
of songs. We constructed a weighted graph where the nodes represent songs and
the weight of an edge is the number of times that the songs co-appeared in a
session.

\subsection{Algorithm benchmarking}
We evaluate the performance of our proposed algorithms:
(i) random edge sampling (ES) as in \cref{alg:edge};
(ii) random wedge sampling (WS) as in \cref{alg:wedge};
(iii) random path sampling (PS) as in \cref{alg:path};
(iv) the static heavy-light (SHL) scheme as in \cref{alg:staticheavylight} (see
below for how we set the thresholds);
(v) the dynamic heavy-light scheme (DHL) as in \cref{alg:dynamicheavylight}
(vi) auto heavy-light (auto-HL), which is the distribution-obvlivious adaptation
of DHL that automatically adjusts edge promotion to optimize the decrease in the
threshold.
As a baseline, we use an optimized sequential version of
NodeIterator++~\cite{BFNPSW15, SV11, CN85} and refer to this as the ``brute
force approach'' (BF). Essentially, this algorithm iterates over vertices with
decreasing degree, and for each vertex it only enumerates triangles that are
formed by neighboring vertices with lower degree than itself.
All algorithms were implemented in C++, and all experiments were executed on a
64 core 2.20 GHz Intel Xeon CPU with 200 GB of RAM. We used parallel sorting for
all algorithms and parallel sampling for the random sampling algorithms; other
parts of the algorithms were executed sequentially. We evaluated the algorithms
for two values of $k$: 1,000 and 100,000.
We also use the arithmetic mean ($p = 1$ in \cref{eq:genmean}).

Recall that DHL (\cref{alg:dynamicheavylight}) uses a power law distribution
model of the edge weights and sets a parameter $\alpha_p$ based on the
power law exponent. We fix $\alpha_p = 1.25$ for our experiments, which
works well across a range of datasets.

The random sampling algorithms are not guaranteed to enumerate all of the
top-$k$ weighted triangles. Instead, we measure the performance of these
algorithms in terms of running time and accuracy (the fraction of top-$k$
triangles actually enumerated). We ran ES long enough for it to achieve at least
94\% ($k =$ 1,000) or 50\% ($k =$ 100,000) accuracy on all datasets.
We also ran WS long enough for it to achieve at least 50\% accuracy (for both
values of $k$). However, in practice, its performance is poor, and we terminate
the algorithm if it takes longer than BF to achieve this accuracy level. PS takes longer than BF to achieve this accuracy level on all datasets, so we do not include it in \cref{tab:compare_results}.

Similarly, the static heavy-light algorithm (\cref{alg:staticheavylight}) is not
guaranteed to achieve 100\% accuracy since it relies on a fixed threshold to
partition the edges as heavy and light and only enumerates triangles formed by
heavy edges. As the threshold decreases a larger number of edges are labelled
heavy. This increases the accuracy but also slows down the algorithm.
\Cref{fig:static_hl_tradeoff} illustrates this trade-off
on the Ethereum dataset; a similar trend is observed on the
other datasets. In our experiments, we labelled the top 10\% of edges as heavy
and report the achieved accuracy. As we discuss below, we find that SHL is
slower and attains sub-100\% accuracy in practice; improving the accuracy would
only make this baseline slower.

\begin{figure}[tb]
  \centering
  \includegraphics[width=0.495\columnwidth]{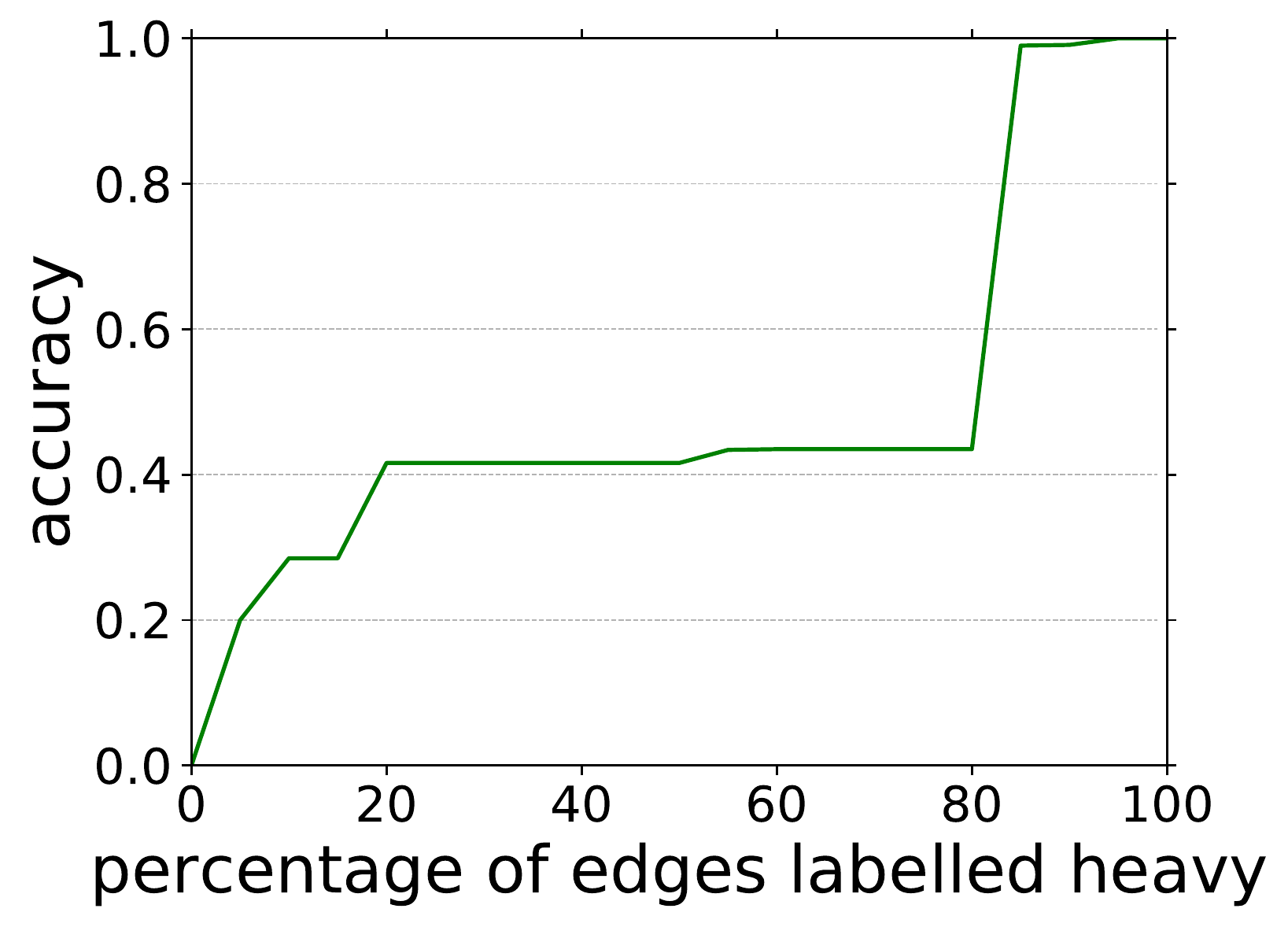}
  \includegraphics[width=0.495\columnwidth]{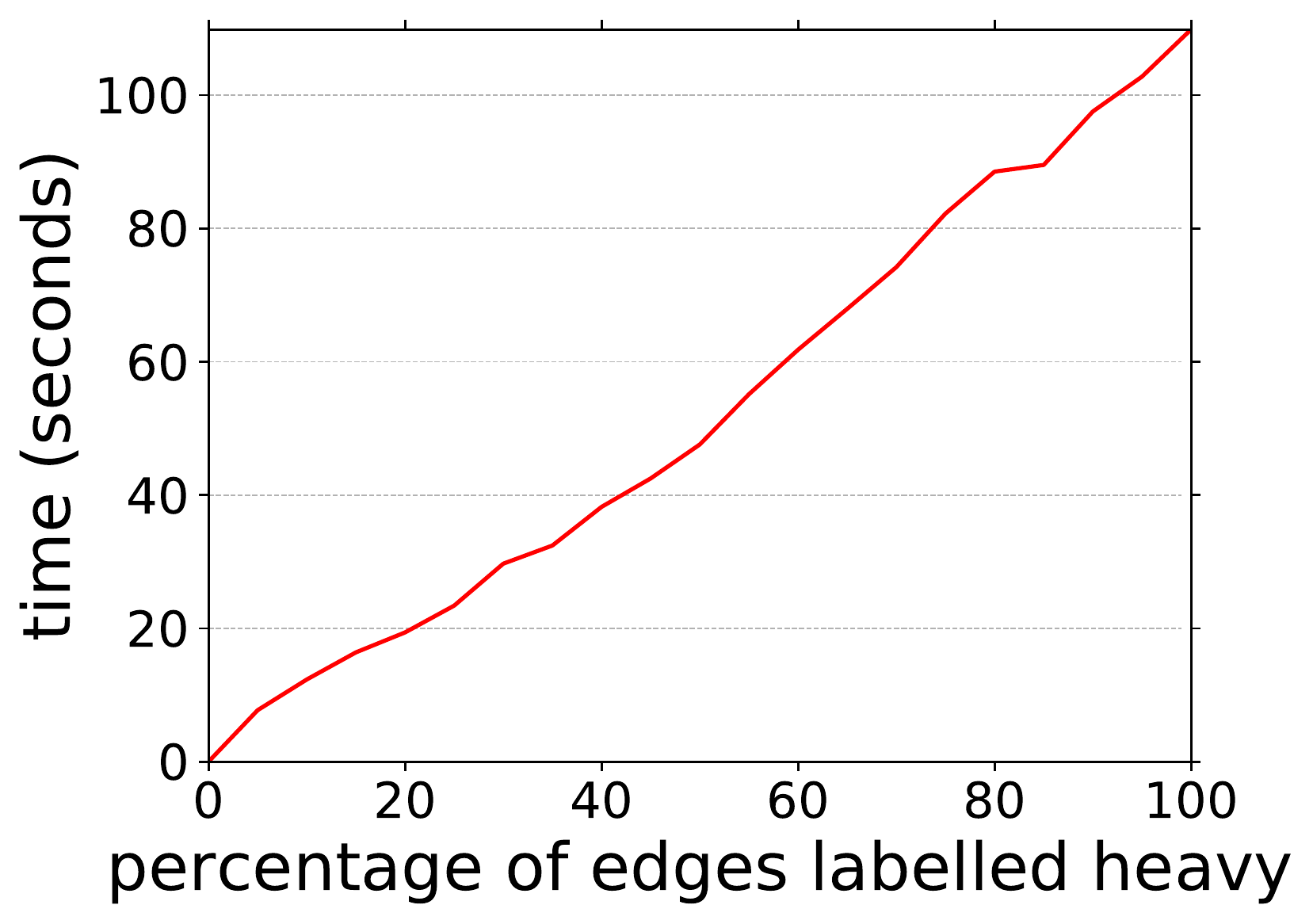}
  \caption{Accuracy and running time as a function of edges labeled ``heavy'' by
    the thresholding for the static heavy-light algorithm
    (\cref{alg:staticheavylight}) on the Ethereum dataset for $k = $ 1000. As
    the threshold decreases, a larger percentage of edges are labelled
    heavy. This increases the accuracy but also increases the running time.
    For reasonable accuracy levels, we find that the running time is slower
    than our optimized dynamic heavy-light algorithm (\cref{alg:dynamicheavylight}),
    which achieves 100\% accuracy (see \cref{tab:compare_results}).
  }
\label{fig:static_hl_tradeoff}
\end{figure}

\begin{table*}[tb]
\caption{Running times of all of our algorithms in seconds averaged over 10 runs. 
BF is brute force enumeration of triangles, which is the out-of-the-box baseline; 
ES is the parallel edge sampling algorithm (\cref{alg:edge});
WS is the parallel wedge sampling algorithm (\cref{alg:wedge});
SHL is the static heavy-light threshold deterministic algorithm (\cref{alg:staticheavylight});
DHL is the dynamic heavy-light deterministic algorithm (\cref{alg:dynamicheavylight}); and
Auto-HL is the distribution oblivious modification of the dynamic heavy-light discussed in \cref{sec:analysis}.
ES was run to achieve 94\% ($k =$ 1,000) or 50\% ($k =$ 100,000) accuracy, while
WS was run to achieve just 50\% accuracy and was stopped early if taking longer BF
(or longer than SHL on the Spotify dataset).
SHL is also an approximation, and we report its  accuracy in the final column.
Overall, our deterministic algorithms (DHL or Auto-HL) are fast and achieve
100\% accuracy; in some cases, our ES technique is slightly faster, but it is approximate.
}
\label{tab:compare_results}
\begin{tabular}{r r r r r r r r c}
 \toprule
$k$ & dataset & BF & ES & WS & DHL & Auto-HL & SHL & Accuracy (SHL)\\
 \midrule
\multirow{3}{*}{1000} 
 & tags-stack-overflow    & 12.71    & 0.57          & 11.28     & \textbf{.08}  & 0.09           & 0.54    & 0.99 \\
 & threads-stack-overflow & 34.92    & 1.31          & $>$34.92  & 0.53          & \textbf{0.38}  & 1.55    & 0.55 \\
 & Wikipedia-clickstream  & 16.32    & 14.31         & $>$16.32  & \textbf{5.44} & 7.26           & 2.02    & 0.87 \\
 & Ethereum               & 52.91    & 9.03          & $>$52.91  & 8.12          & \textbf{6.94}  & 11.90   & 0.28 \\
 & Aminer                 & 243.75   & \textbf{3.72} & $>$243.75 & 13.35         & 12.36          & 43.47   & 0.32 \\
 & reddit-reply           & 4047.62  & 5.19          & 341.17    & 5.02          & \textbf{4.74}  & 102.65  & 0.51 \\
 & MAG                    & 512.24   & \textbf{4.92} & 48.58     & 29.19         & 20.89          & 72.49   & 0.91 \\
 & Spotify                & $>$86400 & 60.33         & $>$5300   & 31.82         & \textbf{30.79} & 5388.45 & 1.00 \\
 \midrule
\multirow{3}{*}{100000} 
 & tags-stack-overflow    & 13.06    & 0.58          & $>$13.06   & \textbf{0.23}  & \textbf{0.23}  & 0.62    & 0.28 \\
 & threads-stack-overflow & 33.99    & \textbf{1.19} & $>$33.99   & 1.82           & 1.73           & 1.63    & 0.32 \\
 & Wikipedia-clickstream  & 17.34    & 13.64         & $>$17.34   & \textbf{5.49}  & 7.24           & 2.15    & 0.13 \\
 & Ethereum               & 57.35    & 10.03         & $>$57.35   & \textbf{18.11} & 19.87          & 11.70   & 0.11 \\
 & Aminer                 & 245.28   & 3.45          & $>$245.28  & 15.38          & \textbf{13.91} & 43.28   & 0.24 \\
 & reddit-reply           & 3857.57  & \textbf{6.52} & $>$3857.57 & 6.87           & 7.49           & 98.34   & 0.34 \\
 & MAG                    & 524.80   & \textbf{4.25} & $>$524.80  & 29.52          & 21.37          & 75.97   & 0.10 \\
 & Spotify                & $>$86400 & 47.27         & $>$5300    & 30.57          & \textbf{29.89} & 5384.17 & 0.92 \\
 \bottomrule
\end{tabular}
\end{table*}

\Cref{tab:compare_results} shows the running times of all of our algorithms.  BF
did not terminate on Spotify after 24 hours, so running times for this baseline
are not available on that dataset.
We highlight a few important findings. First, our deterministic methods DHL and
Auto-HL excel at retrieving the top $k$ triangles. They achieve perfect accuracy
with orders of magnitude faster than BF. For instance, these algorithms get a
1000x speedup on the reddit-reply dataset ($k = 1,000$) and more than a 2000x
speedup on the Spotify dataset ($k = 100,000$). These algorithms also
outperform SHL by a significant margin in terms of time and accuracy. For
example, despite being 30x slower on reddit-reply SHL only achieves 50\%
accuracy ($k = 1,000$). Again, our deterministic algorithms algorithms
\emph{always achieve 100\% accuracy}, and do so in a fraction of the time taken
by the baseline methods BF and SHL.

Among the sampling algorithms, ES performs much better than WS. WS struggles to
achieve high accuracy and is not competitive with the BF baseline or SHL. On the other
hand, ES is quite competitive with even DHL and Auto-HL. ES retrieves the top
1000 triangles on the AMiner and MAG datasets with 99\% accuracy at speedups of
2x or 4x over DHL and Auto-HL. A similar speedup is observed for $k=$ 100,000,
but ES is only achieving 50\% accuracy in these cases. Even though ES works well
in these cases, our deterministic algorithms are still competitive; we conclude
that intelligent deterministic approaches work extremely well for finding top
weighted triangles in large weighted graphs.
  
All of our algorithms except BF and WS use a pre-procesing step of sorting edges
by weight. Surprisingly, we find that this pre-processing step is the bottleneck
in our computations. Sorting in parallel is crucial to achieving high
performance. In turn, this negates the possible benefit of parallel sampling for
the randomized algorithms over our deterministic methods, whose main routines
are inherently sequential.

\hide{
Additionally, the DHL and Auto-HL algorithms rely on manipulation of the
subgraph induced by light edges. A simple way to allow for these operations is
to use a map data structure to store the neighbors (and the corresponding edge
weights) of every vertex. However, building up this data structure is extremely
slow in practice due to large constant factors and massive datasets. On the
other hand, using lazy deletions or computing the map data structure using
memoization makes the preprocessing upto 300x faster at the cost of making the
main loop of DHL and Auto-HL 2-3x slower.}

\section{Discussion}\label{sec:discussion}
Subgraph patterns, and in particular, triangles, have been used extensively in
graph mining applications. However, most of the existing literature focuses on
counting or enumeration tasks in unweighted graphs. In this paper, we developed
deterministic and random sampling algorithms for finding the heaviest triangles
in large weighted graphs. With some tuning, our main deterministic algorithm can
find these triangles in a few seconds on graphs with hundreds of millions of
edges or in 30 seconds on a graph with billions of edges. This is orders of
magnitude faster than what one could achieve with existing fast enumeration
schemes and is usually much faster than even our randomized sampling algorithms.

We anticipate that our work will enable scientists to better explore large-scale
weighted graphs and also expect that our work will spur new algorithmic
developments on subgraph listing and counting in weighted graphs. For example,
an interesting avenue for future research would be the development of random
sampling algorithms that sample triangles with probability proportional to some
arbitrary function of their weight, chosen to converge to
the top weighted triangles faster. This could make random sampling approaches
competitive with our fast deterministic methods. The edge sampling method can
also be generalized to $k$-clique sampling by sampling an edge and then
enumerating adjacent $k$-cliques. How to extend the deterministic algorithms to
top $k$-clique enumeration is less clear, so sampling may be more appropriate
for larger clique patterns.

\section*{Acknowledgements}
This research was supported in part by
NSF Award DMS-1830274,
ARO Award W911NF19-1-0057,
ARO MURI,
NSF grant CCF-1617577, Simons Investigator Award, Google Faculty Research Award, Amazon Research Award, and
Google Cloud resources.



\begin{thebibliography}{64}


\ifx \showCODEN    \undefined \def \showCODEN     #1{\unskip}     \fi
\ifx \showDOI      \undefined \def \showDOI       #1{#1}\fi
\ifx \showISBNx    \undefined \def \showISBNx     #1{\unskip}     \fi
\ifx \showISBNxiii \undefined \def \showISBNxiii  #1{\unskip}     \fi
\ifx \showISSN     \undefined \def \showISSN      #1{\unskip}     \fi
\ifx \showLCCN     \undefined \def \showLCCN      #1{\unskip}     \fi
\ifx \shownote     \undefined \def \shownote      #1{#1}          \fi
\ifx \showarticletitle \undefined \def \showarticletitle #1{#1}   \fi
\ifx \showURL      \undefined \def \showURL       {\relax}        \fi
\providecommand\bibfield[2]{#2}
\providecommand\bibinfo[2]{#2}
\providecommand\natexlab[1]{#1}
\providecommand\showeprint[2][]{arXiv:#2}

\bibitem[\protect\citeauthoryear{Ahmed, Duffield, Neville, and Kompella}{Ahmed
  et~al\mbox{.}}{2014}]%
        {Ahmed-2014-graphsample}
\bibfield{author}{\bibinfo{person}{Nesreen~K. Ahmed}, \bibinfo{person}{Nick
  Duffield}, \bibinfo{person}{Jennifer Neville}, {and} \bibinfo{person}{Ramana
  Kompella}.} \bibinfo{year}{2014}\natexlab{}.
\newblock \showarticletitle{Graph sample and hold: a framework for big-graph
  analytics}. In \bibinfo{booktitle}{\emph{KDD}}.
\newblock


\bibitem[\protect\citeauthoryear{Alon, Yuster, and Zwick}{Alon
  et~al\mbox{.}}{1997}]%
        {AYZ97}
\bibfield{author}{\bibinfo{person}{Noga Alon}, \bibinfo{person}{Raphael
  Yuster}, {and} \bibinfo{person}{Uri Zwick}.} \bibinfo{year}{1997}\natexlab{}.
\newblock \showarticletitle{Finding and Counting Given Length Cycles}.
\newblock \bibinfo{journal}{\emph{Algorithmica}} \bibinfo{volume}{17},
  \bibinfo{number}{3} (\bibinfo{year}{1997}), \bibinfo{pages}{209--223}.
\newblock


\bibitem[\protect\citeauthoryear{Arifuzzaman, Khan, and Marathe}{Arifuzzaman
  et~al\mbox{.}}{2013}]%
        {AKM13}
\bibfield{author}{\bibinfo{person}{Shaikh Arifuzzaman}, \bibinfo{person}{Maleq
  Khan}, {and} \bibinfo{person}{Madhav~V. Marathe}.}
  \bibinfo{year}{2013}\natexlab{}.
\newblock \showarticletitle{{PATRIC:} a parallel algorithm for counting
  triangles in massive networks}. In \bibinfo{booktitle}{\emph{ICDM}}.
  \bibinfo{pages}{529--538}.
\newblock


\bibitem[\protect\citeauthoryear{Avron}{Avron}{2010}]%
        {Avron-2010-counting}
\bibfield{author}{\bibinfo{person}{Haim Avron}.}
  \bibinfo{year}{2010}\natexlab{}.
\newblock \showarticletitle{Counting triangles in large graphs using randomized
  matrix trace estimation}. In \bibinfo{booktitle}{\emph{Workshop on
  Large-scale Data Mining: Theory and Applications}}.
\newblock


\bibitem[\protect\citeauthoryear{Barrat, Barthelemy, Pastor-Satorras, and
  Vespignani}{Barrat et~al\mbox{.}}{2004}]%
        {barrat2004architecture}
\bibfield{author}{\bibinfo{person}{Alain Barrat}, \bibinfo{person}{Marc
  Barthelemy}, \bibinfo{person}{Romualdo Pastor-Satorras}, {and}
  \bibinfo{person}{Alessandro Vespignani}.} \bibinfo{year}{2004}\natexlab{}.
\newblock \showarticletitle{The architecture of complex weighted networks}.
\newblock \bibinfo{journal}{\emph{Proceedings of the national academy of
  sciences}} \bibinfo{volume}{101}, \bibinfo{number}{11}
  (\bibinfo{year}{2004}), \bibinfo{pages}{3747--3752}.
\newblock


\bibitem[\protect\citeauthoryear{Benson, Abebe, Schaub, Jadbabaie, and
  Kleinberg}{Benson et~al\mbox{.}}{2018}]%
        {simplicial}
\bibfield{author}{\bibinfo{person}{Austin~R. Benson}, \bibinfo{person}{Rediet
  Abebe}, \bibinfo{person}{Michael~T. Schaub}, \bibinfo{person}{Ali Jadbabaie},
  {and} \bibinfo{person}{Jon Kleinberg}.} \bibinfo{year}{2018}\natexlab{}.
\newblock \showarticletitle{Simplicial closure and higher-order link
  prediction}.
\newblock \bibinfo{journal}{\emph{Proceedings of the National Academy of
  Sciences}} \bibinfo{volume}{115}, \bibinfo{number}{48}
  (\bibinfo{year}{2018}), \bibinfo{pages}{E11221--E11230}.
\newblock


\bibitem[\protect\citeauthoryear{Benson, Gleich, and Leskovec}{Benson
  et~al\mbox{.}}{2016}]%
        {Benson-2016-hoo}
\bibfield{author}{\bibinfo{person}{Austin~R. Benson}, \bibinfo{person}{David~F.
  Gleich}, {and} \bibinfo{person}{Jure Leskovec}.}
  \bibinfo{year}{2016}\natexlab{}.
\newblock \showarticletitle{Higher-order organization of complex networks}.
\newblock \bibinfo{journal}{\emph{Science}} \bibinfo{volume}{353},
  \bibinfo{number}{6295} (\bibinfo{year}{2016}), \bibinfo{pages}{163--166}.
\newblock


\bibitem[\protect\citeauthoryear{Berry, Fostvedt, Nordman, Phillips, Seshadhri,
  and Wilson}{Berry et~al\mbox{.}}{2015}]%
        {BFNPSW15}
\bibfield{author}{\bibinfo{person}{Jonathan~W. Berry}, \bibinfo{person}{Luke~A.
  Fostvedt}, \bibinfo{person}{Daniel~J. Nordman}, \bibinfo{person}{Cynthia~A.
  Phillips}, \bibinfo{person}{C. Seshadhri}, {and} \bibinfo{person}{Alyson~G.
  Wilson}.} \bibinfo{year}{2015}\natexlab{}.
\newblock \showarticletitle{Why Do Simple Algorithms for Triangle Enumeration
  Work in the Real World?}
\newblock \bibinfo{journal}{\emph{Internet Mathematics}} \bibinfo{volume}{11},
  \bibinfo{number}{6} (\bibinfo{year}{2015}), \bibinfo{pages}{555--571}.
\newblock


\bibitem[\protect\citeauthoryear{Berry, Hendrickson, LaViolette, and
  Phillips}{Berry et~al\mbox{.}}{2011}]%
        {berry2011tolerating}
\bibfield{author}{\bibinfo{person}{Jonathan~W Berry}, \bibinfo{person}{Bruce
  Hendrickson}, \bibinfo{person}{Randall~A LaViolette}, {and}
  \bibinfo{person}{Cynthia~A Phillips}.} \bibinfo{year}{2011}\natexlab{}.
\newblock \showarticletitle{Tolerating the community detection resolution limit
  with edge weighting}.
\newblock \bibinfo{journal}{\emph{Physical Review E}} \bibinfo{volume}{83},
  \bibinfo{number}{5} (\bibinfo{year}{2011}), \bibinfo{pages}{056119}.
\newblock


\bibitem[\protect\citeauthoryear{Berry, Nordman, Phillips, and Wilson}{Berry
  et~al\mbox{.}}{2010}]%
        {BNPW10}
\bibfield{author}{\bibinfo{person}{Jonathan~W Berry}, \bibinfo{person}{Daniel~J
  Nordman}, \bibinfo{person}{Cynthia~A Phillips}, {and}
  \bibinfo{person}{Alyson~G Wilson}.} \bibinfo{year}{2010}\natexlab{}.
\newblock \bibinfo{booktitle}{\emph{Listing triangles in expected linear time
  on a class of power law graphs}}.
\newblock \bibinfo{type}{{T}echnical {R}eport}. \bibinfo{institution}{Technical
  report, Sandia National Laboratories}.
\newblock


\bibitem[\protect\citeauthoryear{Bressan, Chierichetti, Kumar, Leucci, and
  Panconesi}{Bressan et~al\mbox{.}}{2017}]%
        {Bressan-2017-graphlets}
\bibfield{author}{\bibinfo{person}{Marco Bressan}, \bibinfo{person}{Flavio
  Chierichetti}, \bibinfo{person}{Ravi Kumar}, \bibinfo{person}{Stefano
  Leucci}, {and} \bibinfo{person}{Alessandro Panconesi}.}
  \bibinfo{year}{2017}\natexlab{}.
\newblock \showarticletitle{{Counting Graphlets: Space vs Time}}. In
  \bibinfo{booktitle}{\emph{WSDM}}. \bibinfo{publisher}{{ACM}}.
\newblock


\bibitem[\protect\citeauthoryear{Burt}{Burt}{2007}]%
        {burt2007secondhand}
\bibfield{author}{\bibinfo{person}{Ronald~S Burt}.}
  \bibinfo{year}{2007}\natexlab{}.
\newblock \showarticletitle{Secondhand brokerage: Evidence on the importance of
  local structure for managers, bankers, and analysts}.
\newblock \bibinfo{journal}{\emph{Academy of Management Journal}}
  \bibinfo{volume}{50}, \bibinfo{number}{1} (\bibinfo{year}{2007}),
  \bibinfo{pages}{119--148}.
\newblock


\bibitem[\protect\citeauthoryear{Chiba and Nishizeki}{Chiba and
  Nishizeki}{1985}]%
        {CN85}
\bibfield{author}{\bibinfo{person}{Norishige Chiba} {and}
  \bibinfo{person}{Takao Nishizeki}.} \bibinfo{year}{1985}\natexlab{}.
\newblock \showarticletitle{Arboricity and Subgraph Listing Algorithms}.
\newblock \bibinfo{journal}{\emph{{SIAM} J. Comput.}} \bibinfo{volume}{14},
  \bibinfo{number}{1} (\bibinfo{year}{1985}), \bibinfo{pages}{210--223}.
\newblock


\bibitem[\protect\citeauthoryear{Chu and Cheng}{Chu and Cheng}{2011}]%
        {CC11}
\bibfield{author}{\bibinfo{person}{Shumo Chu} {and} \bibinfo{person}{James
  Cheng}.} \bibinfo{year}{2011}\natexlab{}.
\newblock \showarticletitle{Triangle listing in massive networks and its
  applications}. In \bibinfo{booktitle}{\emph{KDD}}. \bibinfo{pages}{672--680}.
\newblock


\bibitem[\protect\citeauthoryear{Contractor, Wasserman, and Faust}{Contractor
  et~al\mbox{.}}{2006}]%
        {contractor2006testing}
\bibfield{author}{\bibinfo{person}{Noshir~S Contractor},
  \bibinfo{person}{Stanley Wasserman}, {and} \bibinfo{person}{Katherine
  Faust}.} \bibinfo{year}{2006}\natexlab{}.
\newblock \showarticletitle{Testing multitheoretical, multilevel hypotheses
  about organizational networks: An analytic framework and empirical example}.
\newblock \bibinfo{journal}{\emph{Academy of Management Review}}
  \bibinfo{volume}{31}, \bibinfo{number}{3} (\bibinfo{year}{2006}),
  \bibinfo{pages}{681--703}.
\newblock


\bibitem[\protect\citeauthoryear{Danisch, Balalau, and Sozio}{Danisch
  et~al\mbox{.}}{2018}]%
        {DBS18}
\bibfield{author}{\bibinfo{person}{Maximilien Danisch},
  \bibinfo{person}{Oana~Denisa Balalau}, {and} \bibinfo{person}{Mauro Sozio}.}
  \bibinfo{year}{2018}\natexlab{}.
\newblock \showarticletitle{Listing k-cliques in Sparse Real-World Graphs}. In
  \bibinfo{booktitle}{\emph{WWW}}. \bibinfo{pages}{589--598}.
\newblock


\bibitem[\protect\citeauthoryear{Durak, Pinar, Kolda, and Seshadhri}{Durak
  et~al\mbox{.}}{2012}]%
        {durak2012degree}
\bibfield{author}{\bibinfo{person}{Nurcan Durak}, \bibinfo{person}{Ali Pinar},
  \bibinfo{person}{Tamara~G Kolda}, {and} \bibinfo{person}{C Seshadhri}.}
  \bibinfo{year}{2012}\natexlab{}.
\newblock \showarticletitle{Degree relations of triangles in real-world
  networks and graph models}. In \bibinfo{booktitle}{\emph{{ACM} CIKM}}.
  \bibinfo{pages}{1712--1716}.
\newblock


\bibitem[\protect\citeauthoryear{Eden, Levi, Ron, and Seshadhri}{Eden
  et~al\mbox{.}}{2017}]%
        {Eden-2017-triangles}
\bibfield{author}{\bibinfo{person}{Talya Eden}, \bibinfo{person}{Amit Levi},
  \bibinfo{person}{Dana Ron}, {and} \bibinfo{person}{C. Seshadhri}.}
  \bibinfo{year}{2017}\natexlab{}.
\newblock \showarticletitle{Approximately Counting Triangles in Sublinear
  Time}.
\newblock \bibinfo{journal}{\emph{SIAM J. Comput.}} \bibinfo{volume}{46},
  \bibinfo{number}{5} (\bibinfo{date}{jan} \bibinfo{year}{2017}),
  \bibinfo{pages}{1603--1646}.
\newblock


\bibitem[\protect\citeauthoryear{Etemadi, Lu, and Tsin}{Etemadi
  et~al\mbox{.}}{2016}]%
        {ELT16}
\bibfield{author}{\bibinfo{person}{Roohollah Etemadi}, \bibinfo{person}{Jianguo
  Lu}, {and} \bibinfo{person}{Yung~H. Tsin}.} \bibinfo{year}{2016}\natexlab{}.
\newblock \showarticletitle{Efficient Estimation of Triangles in Very Large
  Graphs}. In \bibinfo{booktitle}{\emph{CIKM}}. \bibinfo{pages}{1251--1260}.
\newblock


\bibitem[\protect\citeauthoryear{Gleich and Seshadhri}{Gleich and
  Seshadhri}{2012}]%
        {gleich2012vertex}
\bibfield{author}{\bibinfo{person}{David~F Gleich} {and} \bibinfo{person}{C
  Seshadhri}.} \bibinfo{year}{2012}\natexlab{}.
\newblock \showarticletitle{Vertex neighborhoods, low conductance cuts, and
  good seeds for local community methods}. In \bibinfo{booktitle}{\emph{KDD}}.
  ACM, \bibinfo{pages}{597--605}.
\newblock


\bibitem[\protect\citeauthoryear{Hasan and Dave}{Hasan and Dave}{2018}]%
        {hasan18review}
\bibfield{author}{\bibinfo{person}{Mohammad~Al Hasan} {and}
  \bibinfo{person}{Vachik~S. Dave}.} \bibinfo{year}{2018}\natexlab{}.
\newblock \showarticletitle{Triangle counting in large networks: a review}.
\newblock \bibinfo{journal}{\emph{Wiley Interdiscip. Rev. Data Min. Knowl.
  Discov.}} \bibinfo{volume}{8}, \bibinfo{number}{2} (\bibinfo{year}{2018}).
\newblock


\bibitem[\protect\citeauthoryear{Henderson, Gallagher, Eliassi-Rad, Tong, Basu,
  Akoglu, Koutra, Faloutsos, and Li}{Henderson et~al\mbox{.}}{2012}]%
        {Henderson-2012-RolX}
\bibfield{author}{\bibinfo{person}{Keith Henderson}, \bibinfo{person}{Brian
  Gallagher}, \bibinfo{person}{Tina Eliassi-Rad}, \bibinfo{person}{Hanghang
  Tong}, \bibinfo{person}{Sugato Basu}, \bibinfo{person}{Leman Akoglu},
  \bibinfo{person}{Danai Koutra}, \bibinfo{person}{Christos Faloutsos}, {and}
  \bibinfo{person}{Lei Li}.} \bibinfo{year}{2012}\natexlab{}.
\newblock \showarticletitle{{RolX}: structural role extraction \& mining in
  large graphs}. In \bibinfo{booktitle}{\emph{KDD}}.
\newblock


\bibitem[\protect\citeauthoryear{Hessel, Tan, and Lee}{Hessel
  et~al\mbox{.}}{2016}]%
        {hessel2016science}
\bibfield{author}{\bibinfo{person}{Jack Hessel}, \bibinfo{person}{Chenhao Tan},
  {and} \bibinfo{person}{Lillian Lee}.} \bibinfo{year}{2016}\natexlab{}.
\newblock \showarticletitle{{Science, AskScience, and BadScience: On the
  Coexistence of Highly Related Communities}}. In
  \bibinfo{booktitle}{\emph{ICWSM}}. \bibinfo{pages}{171--180}.
\newblock


\bibitem[\protect\citeauthoryear{Jain and Seshadhri}{Jain and
  Seshadhri}{2017}]%
        {JS17}
\bibfield{author}{\bibinfo{person}{Shweta Jain} {and} \bibinfo{person}{C.
  Seshadhri}.} \bibinfo{year}{2017}\natexlab{}.
\newblock \showarticletitle{A Fast and Provable Method for Estimating Clique
  Counts Using Tur{\'{a}}n's Theorem}. In \bibinfo{booktitle}{\emph{WWW}}.
  \bibinfo{pages}{441--449}.
\newblock


\bibitem[\protect\citeauthoryear{Jha, Seshadhri, and Pinar}{Jha
  et~al\mbox{.}}{2015}]%
        {path}
\bibfield{author}{\bibinfo{person}{Madhav Jha}, \bibinfo{person}{C. Seshadhri},
  {and} \bibinfo{person}{Ali Pinar}.} \bibinfo{year}{2015}\natexlab{}.
\newblock \showarticletitle{Path Sampling: {A} Fast and Provable Method for
  Estimating 4-Vertex Subgraph Counts}. In \bibinfo{booktitle}{\emph{WWW}}.
  \bibinfo{pages}{495--505}.
\newblock


\bibitem[\protect\citeauthoryear{Jia, Schaub, Segarra, and Benson}{Jia
  et~al\mbox{.}}{2019}]%
        {Jia-2019-FlowSSL}
\bibfield{author}{\bibinfo{person}{Junteng Jia}, \bibinfo{person}{Michael~T.
  Schaub}, \bibinfo{person}{Santiago Segarra}, {and} \bibinfo{person}{Austin~R.
  Benson}.} \bibinfo{year}{2019}\natexlab{}.
\newblock \showarticletitle{Graph-based Semi-Supervised {\&} Active Learning
  for Edge Flows}. In \bibinfo{booktitle}{\emph{KDD}}.
  \bibinfo{publisher}{{ACM}}.
\newblock


\bibitem[\protect\citeauthoryear{Kolountzakis, Miller, Peng, and
  Tsourakakis}{Kolountzakis et~al\mbox{.}}{2012}]%
        {kolountzakis2012efficient}
\bibfield{author}{\bibinfo{person}{Mihail~N Kolountzakis},
  \bibinfo{person}{Gary~L Miller}, \bibinfo{person}{Richard Peng}, {and}
  \bibinfo{person}{Charalampos~E Tsourakakis}.}
  \bibinfo{year}{2012}\natexlab{}.
\newblock \showarticletitle{Efficient triangle counting in large graphs via
  degree-based vertex partitioning}.
\newblock \bibinfo{journal}{\emph{Internet Mathematics}} \bibinfo{volume}{8},
  \bibinfo{number}{1-2} (\bibinfo{year}{2012}), \bibinfo{pages}{161--185}.
\newblock


\bibitem[\protect\citeauthoryear{Latapy}{Latapy}{2008}]%
        {L08}
\bibfield{author}{\bibinfo{person}{Matthieu Latapy}.}
  \bibinfo{year}{2008}\natexlab{}.
\newblock \showarticletitle{Main-memory triangle computations for very large
  (sparse (power-law)) graphs}.
\newblock \bibinfo{journal}{\emph{Theor. Comput. Sci.}} \bibinfo{volume}{407},
  \bibinfo{number}{1-3} (\bibinfo{year}{2008}), \bibinfo{pages}{458--473}.
\newblock


\bibitem[\protect\citeauthoryear{Lawrence}{Lawrence}{2006}]%
        {lawrence2006organizational}
\bibfield{author}{\bibinfo{person}{Barbara~S Lawrence}.}
  \bibinfo{year}{2006}\natexlab{}.
\newblock \showarticletitle{Organizational reference groups: A missing
  perspective on social context}.
\newblock \bibinfo{journal}{\emph{Organization Science}} \bibinfo{volume}{17},
  \bibinfo{number}{1} (\bibinfo{year}{2006}), \bibinfo{pages}{80--100}.
\newblock


\bibitem[\protect\citeauthoryear{Liu, Benson, and Charikar}{Liu
  et~al\mbox{.}}{2019}]%
        {samplingtemporal}
\bibfield{author}{\bibinfo{person}{Paul Liu}, \bibinfo{person}{Austin~R.
  Benson}, {and} \bibinfo{person}{Moses Charikar}.}
  \bibinfo{year}{2019}\natexlab{}.
\newblock \showarticletitle{Sampling Methods for Counting Temporal Motifs}. In
  \bibinfo{booktitle}{\emph{WSDM}}. \bibinfo{publisher}{{ACM}},
  \bibinfo{pages}{294--302}.
\newblock


\bibitem[\protect\citeauthoryear{Mahadevan, Hubble, Krioukov, Huffaker, and
  Vahdat}{Mahadevan et~al\mbox{.}}{2007}]%
        {mahadevan2007orbis}
\bibfield{author}{\bibinfo{person}{Priya Mahadevan}, \bibinfo{person}{Calvin
  Hubble}, \bibinfo{person}{Dmitri Krioukov}, \bibinfo{person}{Bradley
  Huffaker}, {and} \bibinfo{person}{Amin Vahdat}.}
  \bibinfo{year}{2007}\natexlab{}.
\newblock \showarticletitle{Orbis: rescaling degree correlations to generate
  annotated internet topologies}. In \bibinfo{booktitle}{\emph{ACM SIGCOMM
  Computer Communication Review}}, Vol.~\bibinfo{volume}{37}. ACM,
  \bibinfo{pages}{325--336}.
\newblock


\bibitem[\protect\citeauthoryear{Milo, Itzkovitz, Kashtan, Levitt, Shen-Orr,
  Ayzenshtat, Sheffer, and Alon}{Milo et~al\mbox{.}}{2004}]%
        {Milo-2004-superfamilies}
\bibfield{author}{\bibinfo{person}{Ron Milo}, \bibinfo{person}{Shalev
  Itzkovitz}, \bibinfo{person}{Nadav Kashtan}, \bibinfo{person}{Reuven Levitt},
  \bibinfo{person}{Shai Shen-Orr}, \bibinfo{person}{Inbal Ayzenshtat},
  \bibinfo{person}{Michal Sheffer}, {and} \bibinfo{person}{Uri Alon}.}
  \bibinfo{year}{2004}\natexlab{}.
\newblock \showarticletitle{Superfamilies of Evolved and Designed Networks}.
\newblock \bibinfo{journal}{\emph{Science}} (\bibinfo{year}{2004}).
\newblock


\bibitem[\protect\citeauthoryear{Milo, Shen-Orr, Itzkovitz, Kashtan,
  Chklovskii, and Alon}{Milo et~al\mbox{.}}{2002}]%
        {milo2002network}
\bibfield{author}{\bibinfo{person}{Ron Milo}, \bibinfo{person}{Shai Shen-Orr},
  \bibinfo{person}{Shalev Itzkovitz}, \bibinfo{person}{Nadav Kashtan},
  \bibinfo{person}{Dmitri Chklovskii}, {and} \bibinfo{person}{Uri Alon}.}
  \bibinfo{year}{2002}\natexlab{}.
\newblock \showarticletitle{Network motifs: simple building blocks of complex
  networks}.
\newblock \bibinfo{journal}{\emph{Science}} \bibinfo{volume}{298},
  \bibinfo{number}{5594} (\bibinfo{year}{2002}), \bibinfo{pages}{824--827}.
\newblock


\bibitem[\protect\citeauthoryear{Newman, Strogatz, and Watts}{Newman
  et~al\mbox{.}}{2001}]%
        {newman2001random}
\bibfield{author}{\bibinfo{person}{Mark~EJ Newman}, \bibinfo{person}{Steven~H
  Strogatz}, {and} \bibinfo{person}{Duncan~J Watts}.}
  \bibinfo{year}{2001}\natexlab{}.
\newblock \showarticletitle{Random graphs with arbitrary degree distributions
  and their applications}.
\newblock \bibinfo{journal}{\emph{Physical review E}} \bibinfo{volume}{64},
  \bibinfo{number}{2} (\bibinfo{year}{2001}), \bibinfo{pages}{026118}.
\newblock


\bibitem[\protect\citeauthoryear{Onnela, Saram\"aki, Kert\'esz, and
  Kaski}{Onnela et~al\mbox{.}}{2005a}]%
        {Saramaki-2005-weighted}
\bibfield{author}{\bibinfo{person}{Jukka-Pekka Onnela}, \bibinfo{person}{Jari
  Saram\"aki}, \bibinfo{person}{J\'anos Kert\'esz}, {and}
  \bibinfo{person}{Kimmo Kaski}.} \bibinfo{year}{2005}\natexlab{a}.
\newblock \showarticletitle{Characterizing Motifs in Weighted Complex
  Networks}. In \bibinfo{booktitle}{\emph{{AIP}}}. \bibinfo{publisher}{{AIP}}.
\newblock


\bibitem[\protect\citeauthoryear{Onnela, Saram\"aki, Kert\'esz, and
  Kaski}{Onnela et~al\mbox{.}}{2005b}]%
        {onnelaintensity}
\bibfield{author}{\bibinfo{person}{Jukka-Pekka Onnela}, \bibinfo{person}{Jari
  Saram\"aki}, \bibinfo{person}{J\'anos Kert\'esz}, {and}
  \bibinfo{person}{Kimmo Kaski}.} \bibinfo{year}{2005}\natexlab{b}.
\newblock \showarticletitle{Intensity and coherence of motifs in weighted
  complex networks}.
\newblock \bibinfo{journal}{\emph{Phys. Rev. E}}  \bibinfo{volume}{71}
  (\bibinfo{date}{Jun} \bibinfo{year}{2005}), \bibinfo{pages}{065103}.
\newblock
Issue 6.


\bibitem[\protect\citeauthoryear{Opsahl and Panzarasa}{Opsahl and
  Panzarasa}{2009}]%
        {opsahl2009clustering}
\bibfield{author}{\bibinfo{person}{Tore Opsahl} {and} \bibinfo{person}{Pietro
  Panzarasa}.} \bibinfo{year}{2009}\natexlab{}.
\newblock \showarticletitle{Clustering in weighted networks}.
\newblock \bibinfo{journal}{\emph{Social networks}} \bibinfo{volume}{31},
  \bibinfo{number}{2} (\bibinfo{year}{2009}), \bibinfo{pages}{155--163}.
\newblock


\bibitem[\protect\citeauthoryear{Pagh and Tsourakakis}{Pagh and
  Tsourakakis}{2012a}]%
        {PT12}
\bibfield{author}{\bibinfo{person}{Rasmus Pagh} {and}
  \bibinfo{person}{Charalampos~E. Tsourakakis}.}
  \bibinfo{year}{2012}\natexlab{a}.
\newblock \showarticletitle{Colorful triangle counting and a MapReduce
  implementation}.
\newblock \bibinfo{journal}{\emph{Inf. Process. Lett.}} \bibinfo{volume}{112},
  \bibinfo{number}{7} (\bibinfo{year}{2012}), \bibinfo{pages}{277--281}.
\newblock


\bibitem[\protect\citeauthoryear{Pagh and Tsourakakis}{Pagh and
  Tsourakakis}{2012b}]%
        {pagh2012colorful}
\bibfield{author}{\bibinfo{person}{Rasmus Pagh} {and}
  \bibinfo{person}{Charalampos~E Tsourakakis}.}
  \bibinfo{year}{2012}\natexlab{b}.
\newblock \showarticletitle{Colorful triangle counting and a mapreduce
  implementation}.
\newblock \bibinfo{journal}{\emph{Inform. Process. Lett.}}
  \bibinfo{volume}{112}, \bibinfo{number}{7} (\bibinfo{year}{2012}),
  \bibinfo{pages}{277--281}.
\newblock


\bibitem[\protect\citeauthoryear{Pr{\v{z}}ulj}{Pr{\v{z}}ulj}{2007}]%
        {prvzulj2007biological}
\bibfield{author}{\bibinfo{person}{Nata{\v{s}}a Pr{\v{z}}ulj}.}
  \bibinfo{year}{2007}\natexlab{}.
\newblock \showarticletitle{Biological network comparison using graphlet degree
  distribution}.
\newblock \bibinfo{journal}{\emph{Bioinformatics}} \bibinfo{volume}{23},
  \bibinfo{number}{2} (\bibinfo{year}{2007}), \bibinfo{pages}{e177--e183}.
\newblock


\bibitem[\protect\citeauthoryear{Rahman and Hasan}{Rahman and Hasan}{2013}]%
        {RH13}
\bibfield{author}{\bibinfo{person}{Mahmudur Rahman} {and}
  \bibinfo{person}{Mohammad~Al Hasan}.} \bibinfo{year}{2013}\natexlab{}.
\newblock \showarticletitle{Approximate triangle counting algorithms on
  multi-cores}. In \bibinfo{booktitle}{\emph{{IEEE} Big Data}}.
  \bibinfo{pages}{127--133}.
\newblock


\bibitem[\protect\citeauthoryear{Rahman and Hasan}{Rahman and Hasan}{2014}]%
        {RH14}
\bibfield{author}{\bibinfo{person}{Mahmudur Rahman} {and}
  \bibinfo{person}{Mohammad~Al Hasan}.} \bibinfo{year}{2014}\natexlab{}.
\newblock \showarticletitle{Sampling Triples from Restricted Networks using
  {MCMC} Strategy}. In \bibinfo{booktitle}{\emph{CIKM}}.
  \bibinfo{pages}{1519--1528}.
\newblock


\bibitem[\protect\citeauthoryear{Robins, Pattison, Kalish, and Lusher}{Robins
  et~al\mbox{.}}{2007}]%
        {Robins-2007-ERGMs}
\bibfield{author}{\bibinfo{person}{Garry Robins}, \bibinfo{person}{Pip
  Pattison}, \bibinfo{person}{Yuval Kalish}, {and} \bibinfo{person}{Dean
  Lusher}.} \bibinfo{year}{2007}\natexlab{}.
\newblock \showarticletitle{An introduction to exponential random graph
  $p^{\ast}$ models for social networks}.
\newblock \bibinfo{journal}{\emph{Social Networks}} \bibinfo{volume}{29},
  \bibinfo{number}{2} (\bibinfo{date}{May} \bibinfo{year}{2007}),
  \bibinfo{pages}{173--191}.
\newblock


\bibitem[\protect\citeauthoryear{Robles, Moreno, and Neville}{Robles
  et~al\mbox{.}}{2016}]%
        {robles2016sampling}
\bibfield{author}{\bibinfo{person}{Pablo Robles}, \bibinfo{person}{Sebastian
  Moreno}, {and} \bibinfo{person}{Jennifer Neville}.}
  \bibinfo{year}{2016}\natexlab{}.
\newblock \showarticletitle{Sampling of attributed networks from hierarchical
  generative models}. In \bibinfo{booktitle}{\emph{KDD}}. ACM,
  \bibinfo{pages}{1155--1164}.
\newblock


\bibitem[\protect\citeauthoryear{Rohe and Qin}{Rohe and Qin}{2013}]%
        {Rohe-2013-blessing}
\bibfield{author}{\bibinfo{person}{Karl Rohe} {and} \bibinfo{person}{Tai Qin}.}
  \bibinfo{year}{2013}\natexlab{}.
\newblock \showarticletitle{The blessing of transitivity in sparse and
  stochastic networks}.
\newblock \bibinfo{journal}{\emph{arXiv}} (\bibinfo{year}{2013}).
\newblock


\bibitem[\protect\citeauthoryear{Rossi and Ahmed}{Rossi and Ahmed}{2015}]%
        {Rossi-2015-role}
\bibfield{author}{\bibinfo{person}{Ryan~A. Rossi} {and}
  \bibinfo{person}{Nesreen~K. Ahmed}.} \bibinfo{year}{2015}\natexlab{}.
\newblock \showarticletitle{Role Discovery in Networks}.
\newblock \bibinfo{journal}{\emph{{IEEE} Transactions on Knowledge and Data
  Engineering}} \bibinfo{volume}{27}, \bibinfo{number}{4} (\bibinfo{date}{apr}
  \bibinfo{year}{2015}), \bibinfo{pages}{1112--1131}.
\newblock


\bibitem[\protect\citeauthoryear{Schank and Wagner}{Schank and Wagner}{2005}]%
        {SW05-CC}
\bibfield{author}{\bibinfo{person}{Thomas Schank} {and}
  \bibinfo{person}{Dorothea Wagner}.} \bibinfo{year}{2005}\natexlab{}.
\newblock \showarticletitle{Approximating Clustering Coefficient and
  Transitivity}.
\newblock \bibinfo{journal}{\emph{J. Graph Algorithms Appl.}}
  \bibinfo{volume}{9}, \bibinfo{number}{2} (\bibinfo{year}{2005}),
  \bibinfo{pages}{265--275}.
\newblock


\bibitem[\protect\citeauthoryear{Seshadhri, Pinar, and Kolda}{Seshadhri
  et~al\mbox{.}}{2013}]%
        {Seshadhri-2013-wedge}
\bibfield{author}{\bibinfo{person}{C. Seshadhri}, \bibinfo{person}{Ali Pinar},
  {and} \bibinfo{person}{Tamara~G. Kolda}.} \bibinfo{year}{2013}\natexlab{}.
\newblock \showarticletitle{Triadic Measures on Graphs: The Power of Wedge
  Sampling}. In \bibinfo{booktitle}{\emph{Proceedings of SDM}}.
\newblock


\bibitem[\protect\citeauthoryear{Seshadhri, Pinar, and Kolda}{Seshadhri
  et~al\mbox{.}}{2014}]%
        {wedge}
\bibfield{author}{\bibinfo{person}{C. Seshadhri}, \bibinfo{person}{Ali Pinar},
  {and} \bibinfo{person}{Tamara~G. Kolda}.} \bibinfo{year}{2014}\natexlab{}.
\newblock \showarticletitle{Wedge sampling for computing clustering
  coefficients and triangle counts on large graphs}.
\newblock \bibinfo{journal}{\emph{Statistical Analysis and Data Mining}}
  \bibinfo{volume}{7}, \bibinfo{number}{4} (\bibinfo{year}{2014}),
  \bibinfo{pages}{294--307}.
\newblock


\bibitem[\protect\citeauthoryear{Sinha, Shen, Song, Ma, Eide, Hsu, and
  Wang}{Sinha et~al\mbox{.}}{2015}]%
        {MAG-data}
\bibfield{author}{\bibinfo{person}{Arnab Sinha}, \bibinfo{person}{Zhihong
  Shen}, \bibinfo{person}{Yang Song}, \bibinfo{person}{Hao Ma},
  \bibinfo{person}{Darrin Eide}, \bibinfo{person}{Bo-June~(Paul) Hsu}, {and}
  \bibinfo{person}{Kuansan Wang}.} \bibinfo{year}{2015}\natexlab{}.
\newblock \showarticletitle{An Overview of Microsoft Academic Service ({MAS})
  and Applications}. In \bibinfo{booktitle}{\emph{WWW {\textquotesingle}15
  Companion}}. \bibinfo{publisher}{{ACM}}.
\newblock


\bibitem[\protect\citeauthoryear{Soufiani and Airoldi}{Soufiani and
  Airoldi}{2012}]%
        {Soufiani-2012-graphlet}
\bibfield{author}{\bibinfo{person}{Hossein~Azari Soufiani} {and}
  \bibinfo{person}{Edo Airoldi}.} \bibinfo{year}{2012}\natexlab{}.
\newblock \showarticletitle{Graphlet decomposition of a weighted network}. In
  \bibinfo{booktitle}{\emph{AISTATS}}. \bibinfo{pages}{54--63}.
\newblock


\bibitem[\protect\citeauthoryear{Stefani, Epasto, Riondato, and Upfal}{Stefani
  et~al\mbox{.}}{2017a}]%
        {Stefani-2017-triest}
\bibfield{author}{\bibinfo{person}{Lorenzo~De Stefani},
  \bibinfo{person}{Alessandro Epasto}, \bibinfo{person}{Matteo Riondato}, {and}
  \bibinfo{person}{Eli Upfal}.} \bibinfo{year}{2017}\natexlab{a}.
\newblock \showarticletitle{{TRI\`{E}ST: Counting Local and Global Triangles in
  Fully Dynamic Streams with Fixed Memory Size}}.
\newblock \bibinfo{journal}{\emph{{ACM} TKDD}} \bibinfo{volume}{11},
  \bibinfo{number}{4} (\bibinfo{date}{jun} \bibinfo{year}{2017}),
  \bibinfo{pages}{1--50}.
\newblock


\bibitem[\protect\citeauthoryear{Stefani, Terolli, and Upfal}{Stefani
  et~al\mbox{.}}{2017b}]%
        {STU17}
\bibfield{author}{\bibinfo{person}{Lorenzo~De Stefani}, \bibinfo{person}{Erisa
  Terolli}, {and} \bibinfo{person}{Eli Upfal}.}
  \bibinfo{year}{2017}\natexlab{b}.
\newblock \showarticletitle{Tiered sampling: An efficient method for
  approximate counting sparse motifs in massive graph streams}. In
  \bibinfo{booktitle}{\emph{{IEEE} BigData 2017}}. \bibinfo{pages}{776--786}.
\newblock


\bibitem[\protect\citeauthoryear{Suri and Vassilvitskii}{Suri and
  Vassilvitskii}{2011}]%
        {SV11}
\bibfield{author}{\bibinfo{person}{Siddharth Suri} {and}
  \bibinfo{person}{Sergei Vassilvitskii}.} \bibinfo{year}{2011}\natexlab{}.
\newblock \showarticletitle{Counting triangles and the curse of the last
  reducer}. In \bibinfo{booktitle}{\emph{WWW}}. \bibinfo{pages}{607--614}.
\newblock


\bibitem[\protect\citeauthoryear{Tang, Zhang, Yao, Li, Zhang, and Su}{Tang
  et~al\mbox{.}}{2008}]%
        {Tang-2008-AMiner}
\bibfield{author}{\bibinfo{person}{Jie Tang}, \bibinfo{person}{Jing Zhang},
  \bibinfo{person}{Limin Yao}, \bibinfo{person}{Juanzi Li}, \bibinfo{person}{Li
  Zhang}, {and} \bibinfo{person}{Zhong Su}.} \bibinfo{year}{2008}\natexlab{}.
\newblock \showarticletitle{{ArnetMiner}}. In \bibinfo{booktitle}{\emph{KDD}}.
  \bibinfo{publisher}{{ACM}}.
\newblock


\bibitem[\protect\citeauthoryear{Tsourakakis}{Tsourakakis}{2008}]%
        {T08}
\bibfield{author}{\bibinfo{person}{Charalampos~E. Tsourakakis}.}
  \bibinfo{year}{2008}\natexlab{}.
\newblock \showarticletitle{Fast Counting of Triangles in Large Real Networks
  without Counting: Algorithms and Laws}. In \bibinfo{booktitle}{\emph{ICDM}}.
  \bibinfo{pages}{608--617}.
\newblock


\bibitem[\protect\citeauthoryear{Tsourakakis, Kang, Miller, and
  Faloutsos}{Tsourakakis et~al\mbox{.}}{2009}]%
        {tsourakakis2009doulion}
\bibfield{author}{\bibinfo{person}{Charalampos~E Tsourakakis},
  \bibinfo{person}{U Kang}, \bibinfo{person}{Gary~L Miller}, {and}
  \bibinfo{person}{Christos Faloutsos}.} \bibinfo{year}{2009}\natexlab{}.
\newblock \showarticletitle{Doulion: counting triangles in massive graphs with
  a coin}. In \bibinfo{booktitle}{\emph{KDD}}. ACM, \bibinfo{pages}{837--846}.
\newblock


\bibitem[\protect\citeauthoryear{Tsourakakis, Kolountzakis, and
  Miller}{Tsourakakis et~al\mbox{.}}{2011}]%
        {tsourakakis2011triangle}
\bibfield{author}{\bibinfo{person}{Charalampos~E Tsourakakis},
  \bibinfo{person}{Mihail~N Kolountzakis}, {and} \bibinfo{person}{Gary~L
  Miller}.} \bibinfo{year}{2011}\natexlab{}.
\newblock \showarticletitle{Triangle Sparsifiers.}
\newblock \bibinfo{journal}{\emph{J. Graph Algorithms Appl.}}
  \bibinfo{volume}{15}, \bibinfo{number}{6} (\bibinfo{year}{2011}),
  \bibinfo{pages}{703--726}.
\newblock


\bibitem[\protect\citeauthoryear{T{\"{u}}rkoglu and Turk}{T{\"{u}}rkoglu and
  Turk}{2017}]%
        {TurkogluT17}
\bibfield{author}{\bibinfo{person}{Duru T{\"{u}}rkoglu} {and}
  \bibinfo{person}{Ata Turk}.} \bibinfo{year}{2017}\natexlab{}.
\newblock \showarticletitle{Edge-Based Wedge Sampling to Estimate Triangle
  Counts in Very Large Graphs}. In \bibinfo{booktitle}{\emph{2017 {IEEE} ICDM
  2017}}. \bibinfo{publisher}{IEEE}, \bibinfo{pages}{455--464}.
\newblock


\bibitem[\protect\citeauthoryear{Wasserman and Faust}{Wasserman and
  Faust}{1994}]%
        {wasserman1994social}
\bibfield{author}{\bibinfo{person}{Stanley Wasserman} {and}
  \bibinfo{person}{Katherine Faust}.} \bibinfo{year}{1994}\natexlab{}.
\newblock \bibinfo{booktitle}{\emph{Social network analysis: Methods and
  applications}}. Vol.~\bibinfo{volume}{8}.
\newblock \bibinfo{publisher}{Cambridge university press}.
\newblock


\bibitem[\protect\citeauthoryear{Welles, Van~Devender, and Contractor}{Welles
  et~al\mbox{.}}{2010}]%
        {welles2010friend}
\bibfield{author}{\bibinfo{person}{Brooke~Foucault Welles},
  \bibinfo{person}{Anne Van~Devender}, {and} \bibinfo{person}{Noshir
  Contractor}.} \bibinfo{year}{2010}\natexlab{}.
\newblock \showarticletitle{Is a" friend" a friend? Investigating the structure
  of friendship networks in virtual worlds}. In \bibinfo{booktitle}{\emph{CHI
  2010}}. \bibinfo{pages}{4027--4032}.
\newblock


\bibitem[\protect\citeauthoryear{Wulczyn and Taraborelli}{Wulczyn and
  Taraborelli}{2017}]%
        {Wikipedia-clickstream}
\bibfield{author}{\bibinfo{person}{Ellery Wulczyn} {and} \bibinfo{person}{Dario
  Taraborelli}.} \bibinfo{year}{2017}\natexlab{}.
\newblock \bibinfo{title}{Wikipedia Clickstream}.
\newblock
\newblock


\bibitem[\protect\citeauthoryear{Xu, Wang, and Gu}{Xu et~al\mbox{.}}{2014}]%
        {Xu-2014-projection}
\bibfield{author}{\bibinfo{person}{Kuai Xu}, \bibinfo{person}{Feng Wang}, {and}
  \bibinfo{person}{Lin Gu}.} \bibinfo{year}{2014}\natexlab{}.
\newblock \showarticletitle{Behavior Analysis of Internet Traffic via Bipartite
  Graphs and One-Mode Projections}.
\newblock \bibinfo{journal}{\emph{{IEEE}/{ACM} Transactions on Networking}}
  \bibinfo{volume}{22}, \bibinfo{number}{3} (\bibinfo{date}{June}
  \bibinfo{year}{2014}), \bibinfo{pages}{931--942}.
\newblock


\bibitem[\protect\citeauthoryear{Yin, Benson, and Leskovec}{Yin
  et~al\mbox{.}}{2019}]%
        {YBL19}
\bibfield{author}{\bibinfo{person}{Hao Yin}, \bibinfo{person}{Austin~R.
  Benson}, {and} \bibinfo{person}{Jure Leskovec}.}
  \bibinfo{year}{2019}\natexlab{}.
\newblock \showarticletitle{The Local Closure Coefficient: {A} New Perspective
  On Network Clustering}. In \bibinfo{booktitle}{\emph{WSDM}}.
  \bibinfo{pages}{303--311}.
\newblock


\end{thebibliography}


\appendix
\label{app:cliques}

\section{Additional Numerical Experiments}

While our methods focused on finding top weighted triangles, some of the
sampling methods naturally extend to the larger cliques. In this case, the
weight of a clique is some generalized $p$-mean of the weights of the
edges in the clique.
We found that the extension of edge sampling---as described in \cref{subsec:cliquesampling}---performed best
in practice, and we compare the performance of edge sampling algorithm to an
intelligent brute force approach for finding top weighted 4-cliques and 5-cliques
that enumerates all cliques using the algorithm of Chiba and Nishizeki~\cite{CN85}.
Our main finding is that edge sampling can approximately retrieve the top weighted cliques
much faster; however the performance does show higher variance than for the case of triangles.


Since brute force enumeration of even 4-cliques is computationally expensive, we
use smaller datasets for these experiments than the ones in \cref{tab:dataset}. 
We construct weighted graphs from 5 temporal hypergraph datasets~\cite{simplicial},
where the weight of an edge $(u,v)$ is the number of hyperedges that contain nodes
$u$ and $v$. \Cref{tab:clique_dataset} shows summary statistics for the data.

\begin{table}[b]
\caption{Summary statistics of datasets used in clique sampling experiments.}
\centering
\begin{tabular}{r c c c c}
\toprule
\multirow{2}{*}{dataset} & \multirow{2}{*}{\# nodes} & \multirow{2}{*}{\# edges} & \multicolumn{2}{c}{edge weight} \\ \cmidrule{4-5}
& & & mean & max \\
\midrule
email-Enron            & 143  & 1800 & 16  & 819 \\
email-Eu               & 979  & 29K  & 25  & 48K \\
contact-high-school    & 327  & 5.8K & 32  & 29K \\
contact-primary-school & 242  & 8.3K & 15  & 780 \\
tags-math-sx           & 1.6K & 91K  & 17  & 16K \\
\bottomrule
\end{tabular}
\label{tab:clique_dataset}
\end{table}


We evaluate the performance of our proposed algorithm,
random edge sampling (ES) as in \cref{alg:edge} with the modifications
mentioned in \cref{subsec:cliquesampling}. As a baseline, we use an
optimized sequential clique enumerator and refer to this as the
``brute force approach'' (BF).
The rest of the experimental setup is the same as in \cref{sec:experiments}.
We evaluated the algorithm for two values of $k$: $1,000$ and $100,000$,
and two clique sizes: 4 and 5.
We also use the arithmetic mean ($p = 1$ in \cref{eq:pmean_unnorm}).

The random edge sampling
algorithm is not guaranteed to enumerate all of the top-$k$
weighted cliques. Instead, we measure the performance of this algorithm
in terms of running time and accuracy (the fraction of top-$k$ cliques actually
enumerated). We ran ES long enough for it
to achieve at least 50\% (clique size = 4) or at least 40\% (clique size = 5) accuracy on all datasets.

\cref{tab:clique_exp} shows the running  times of BF and ES. We observe substantial speedup of ES on all datasets---for example, we see speedups of 14x on the email-Eu dataset for both 4-cliques and 5-cliques.
However, ES is only required to achieve 40 to 50\% accuracy in these cases.
We find that, unlike the performance of ES for finding top weighted triangles, 
the performance of ES for finding top weighted cliques has higher variance across runs.
In some cases, ES can actually become more expensive than the brute force approach (5-cliques on the tags-math-sx dataset);
a better understanding of these changes in performance is still an open question.

\begin{table}[hb]
\caption{
Running times in seconds for brute force enumeration (BF) and parallel
edge sampling (ES) (averaged over 10 runs) for 4 and 5 cliques. ES was run long enough to achieve 50\% (4-cliques)
and 40\% (5-cliques) accuracy for $k = 1,000$ and $100,000$.
}
\begin{tabular}{r r r r r r r r}
 \toprule
    &         & \multicolumn{4}{c}{clique size} \\
    \cmidrule{3-6}
    &         & \multicolumn{2}{c}{4} & \multicolumn{2}{c}{5} \\
 \midrule
$k$ & dataset & BF & ES & BF & ES \\
 \midrule
\multirow{3}{*}{1000} 
 & email-Enron            &  1.13  & 0.05 & 1.1    & 0.3    \\
 & email-Eu               &  9.17  & 0.4  & 83     & 5.8    \\
 & contact-high-school    &  1.32  & 0.08 & 1.24   & 0.5    \\
 & contact-primary-school &  1.63  & 0.3  & 9      & 2.5    \\
 & tags-math-sx           &  398   & 4.5  & 9340   & $>$9340       \\
 \midrule
\multirow{3}{*}{100000} 
 & email-Enron            &  1.13  & 0.05 & 1.1    & 0.3   \\
 & email-Eu               &  9.17  & 0.45 & 83     & 5.8   \\
 & contact-high-school    &  1.32  & 0.08 & 1.24   & 0.5   \\
 & contact-primary-school &  1.63  & 0.2  & 9      & 2.5   \\
 & tags-math-sx           &  398   & 4.5  & 9340   & $>$9340      \\
 \bottomrule
\end{tabular}
\label{tab:clique_exp}
\end{table}{}


\end{document}